\def \hat{\widehat}
\def \tilde{\widetilde}
\def\dcal{\mathcal{D}}
\def \hatG{\widehat{\gamma}^{(L)}}
\def \hatB{\widehat{\beta}^{(L)}}
\def \w{\omega}
\theoremstyle{plain}
\newtheorem{theorem}{Theorem}[section]
\newtheorem{lemma}[theorem]{Lemma}
\newtheorem{condition}[theorem]{Condition}
\newtheorem{proposition}[theorem]{Proposition}
\theoremstyle{remark}
\newtheorem{remark}[theorem]{Remark}
\begin{document}

\begin{frontmatter}
\title{Treatment Effect Estimation with Efficient Data Aggregation}
\runtitle{Treatment Effect Estimation with Efficient Data Aggregation}

\begin{aug}
\author[A]{\inits{P.}\fnms{Snigdha } \snm{Panigrahi}\ead[label=e1]{psnigdha@umich.edu}},
\author[B]{\inits{W.}\fnms{Jingshen} \snm{Wang}\ead[label=e2]{jingshenwang@berkeley.edu}}
\and
\author[A]{\inits{H.}\fnms{Xuming} \snm{He}\ead[label=e3]{xmhe@umich.edu}}
\address[A]{Department of Statistics, University of Michigan, Ann Arbor, USA.
\printead{e1,e3}}

\address[B]{Division of Biostatistics, UC Berkeley, USA.
\printead{e2}}

\end{aug}

\begin{abstract}
Data aggregation, also known as meta analysis, is widely used to
combine knowledge on parameters shared in common (e.g., average treatment effect) between multiple studies.
In this paper, we introduce an attractive data aggregation scheme that pools summary statistics from various existing studies. 
Our scheme informs the design of new validation studies and yields us unbiased estimators for the shared parameters. 
In our setup, each existing study applies a LASSO regression to select a parsimonious model from a large set of covariates.
It is well known that post-hoc estimators, in the selected model, tend to be biased. 
We show that a novel technique called \textit{data carving} yields us a new unbiased estimator by aggregating simple summary statistics from all existing studies.
Our estimator has two key features: (a) we make the fullest possible use of data, from all studies, without the risk of bias from model selection; (b) we enjoy the added benefit of individual data privacy, because raw data from these studies need not be shared or stored for efficient estimation. 
\end{abstract}

\begin{keyword}
\kwd{Conditional inference}
\kwd{Data aggregation}
\kwd{Debiased estimation}
\kwd{Data carving}
\kwd{LASSO}
\kwd{Meta analysis}
\end{keyword}

\end{frontmatter}

\section{Introduction}

Aggregating knowledge across studies is a popular practice to pool multiple findings, increase precision of shared results, and plan future studies. 
In this paper, we focus on the problem of treatment effect evaluation after adjusting for potential confounders from various independent, existing studies. 
Existing studies in our setup can be pilot studies, preliminary analyses, or previously reported results, all evaluating the same treatment effect. 
Our paper introduces a data aggregation scheme which informs an efficient design for follow-up validation studies, and yields us estimators for the treatment effects by pooling summary statistics from the existing studies. 

In the context of high-dimensional data, the number of potential measured confounders might be very large.
Estimation of treatment effects in the presence of many possible confounders usually calls for a model selection procedure in any individual study.  
Two practical considerations gain prominence in this setup. 
First, full data from the existing studies, e.g., raw data for the confounders, might be unavailable, due to either practical limitations in data-sharing, or due to legitimate  privacy and confidentiality concerns \citep{wolfson2010datashield, cai2021individual}. Second, if model selection is performed on each existing study, post-hoc estimation of treatment effects in the selected model tends to be biased. 
We thus ask:
(i) whether and how data from two or more studies can be aggregated without the risk of selection bias, and (ii) whether the former goal can be achieved by using only summary statistics that are sufficient for data aggregation.

To address the questions raised above, we focus on the very widely used LASSO regression \citep{tibshirani1996regression}, which is simply called the LASSO.
By placing a penalty on the absolute value of regression coefficients, the LASSO shrinks many of them to be exactly zero and leads to a more parsimonious model. 
In each existing study, when the LASSO is used to select a set of covariates as confounders, the usual summary statistics---the first two moments of data required to conduct regression in the selected model---are no longer sufficient for obtaining unbiased estimators through data aggregation. 
It, however, turns out that the summary statistics required by our aggregation scheme assume a fairly simple and compressed form. 
An immediate by-product of this finding is a new guideline for what needs to be reported in  publications for an efficient and unbiased estimation of shared parameters in models selected by the LASSO.

While data aggregation from existing studies can be performed on their own, follow-up validation studies are either necessitated by a regulatory requirement, or pursued to report findings with higher statistical power. 
Researchers in science and public health  usually find it more cost-efficient to design and carry out new validation studies by relying on a synthesis of prior research.
This practice allows researchers to focus on a smaller set of confounding factors, and reduce variability of estimators from any single study at the same time. 
For example, research into cost-efficient trial designs has led to the development of methodology for seamless phase II/III designs in clinical trials.
Some of these approaches have been reviewed by \cite{stallard2011seamless}. 
A more recent Lancet article by \cite{park2021covid} highlights not just the need for well designed clinical trials, but also emphasizes the need that 
these trials are structured as per a master protocol in a coordinated and collaborative manner. 
Back to our problem setup, the validation study $\mathcal{V}$ collects data on a subset of potential confounders that were selected by any of the already existing studies.

We showcase how a recent technique, namely \textit{data carving}, can be used to efficiently aggregate summary statistics from each existing study  for an unbiased estimation of treatment effects. 
Data carving allows us to re-use information from data that has not been been fully exploited by the LASSO at the time of model selection.
A schematic depiction of our aggregation scheme is provided in Figure \ref{fig:schematic}. 
Suppose that we have $K$ existing studies.
We apply data carving to aggregate summary statistics from existing study $k$ with data from the validation study $\mathcal{V}$.
This gives us our new estimator for treatment effects $\widehat{\alpha}_k^{\text{\; carve}}$, which we call the carved estimator.
A simple averaging of the $K$ carved estimators allows us to pool information from the existing studies, and yields us our final aggregated estimator $\tilde{\alpha}$.

\begin{figure}
	\centering
	\includegraphics[width = 0.95\linewidth]{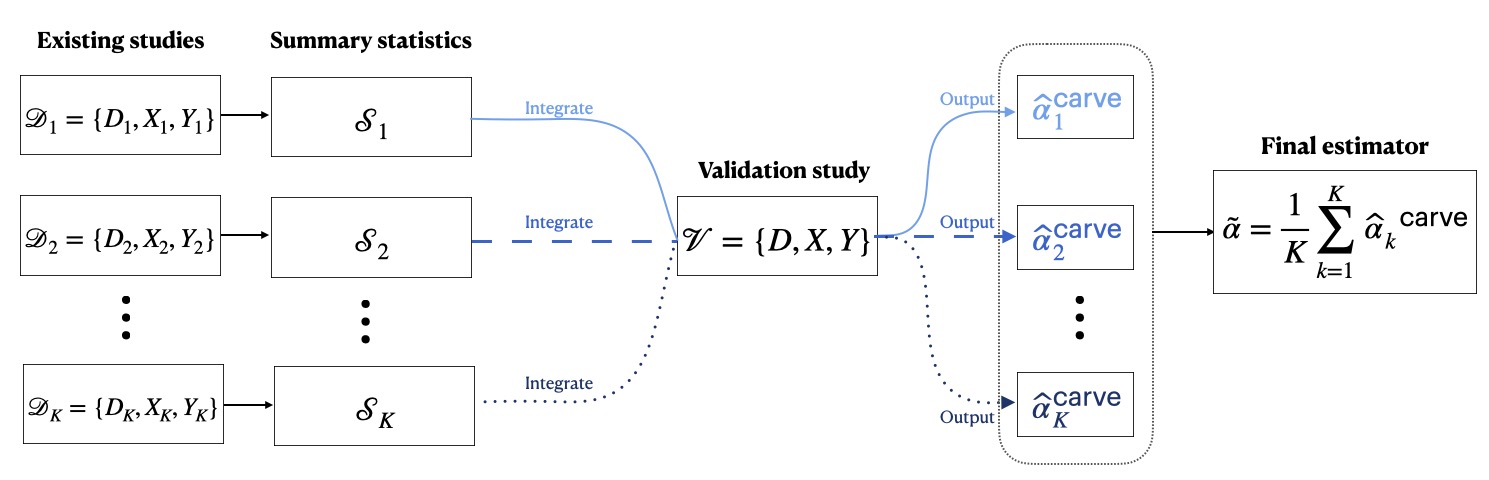}
	\caption{Schematic depiction of efficient and privacy-preserving data aggregation. The exact form for the summary statistics in our scheme and our estimator $\tilde{\alpha}$ is provided in Section 2.}
	\label{fig:schematic}
\end{figure}

\subsection{Related Work} 
A simple estimator in our problem setup is the split-and-aggregate estimator.
The split-and-aggregate estimator is obtained by first (i) refitting the selected model from each existing study on the validation dataset, followed by (ii) averaging the $K$ split-based estimators.
Because the data used for model selection is not re-used at the time of estimation, the resulting estimator does not suffer from the issue of the selection bias.
However, this estimator suffers from a loss of efficiency as it discards all data from the existing studies.
This loss in efficiency is especially noticeable if the validation study is small relative to the existing studies.

Inspired by recent efforts to re-use information from model selection \citep{fithian2014optimal, dwork2015generalization, tian2020prediction}, the carved estimator makes the fullest possible use of data for estimating treatment effects. 
Data carving, as introduced in \citep{fithian2014optimal}, resembles data-splitting in the selection step, because model selection is carried out on only a subset of the full data.
But, carved estimators, unlike those based on data splitting, use the full data instead of the held-out portion alone. 
More precisely, carved estimators are obtained by conditioning on the observed selection event to remove bias from the model selection step.
We point out to previous work by \citep{panigrahi2016integrative, panigrahi2019carving, schultheiss2021multicarving} which deploy data carving to obtain interval estimators for selected regression coefficients. 
More generally, papers by \cite{exact_lasso, tian2018selective, panigrahi2018scalable, gao2020selective, chen2020valid, panigrahi2019approximate, panigrahi2020approximate} have used conditioning as an effective technique to offer post-selection inference through valid p-values, confidence intervals, confidence regions, and credible intervals.

More sophisticated proposals by \cite{tang2020distributed}, \cite{lee2017communication}, and \cite{maity2019communication} aggregate debiased LASSO estimators from distributed datasets. In the same spirit as \cite{lin2010relative}, \cite{cai2021individual} provide a more efficient, inversely variance-weighted debiased LASSO estimator to accommodate cross-study heterogeneity in similar settings. 
The proposed debiased procedures in these papers offer a valid way out to mitigate model selection bias in high-dimensional settings.
But, the debiasing correction involves estimating and inverting a high-dimensional information matrix. 
Moreover, the reduction in bias is not for free, and comes at a price in statistical efficiency. 
Especially, this price is large when the treatment assignment is correlated with noise variables in the model.
For example, \cite{wang2020debiased} have previously illustrated this phenomenon on large observational studies.
Not only does our new estimator avoid estimation (and storage) of high-dimensional matrices for removing bias, but also delivers higher statistical efficiency than existing estimators in the situation described above.

\subsection{Organization of paper}

The remaining paper is organized as follows. 
Section \ref{Sec:framework} formally states our problem setup and introduces the main ideas behind our new estimator through a first example.
In Section \ref{Sec:theory}, we outline the steps to  obtain our carved estimator through an aggregation of summary statistics, and develop the asymptotic theory for the new estimator.
Numerical experiments that support our asymptotic results are presented in Section \ref{Sec:simulation}.
In Section \ref{sec:real}, we highlight the efficacy of our data aggregation and estimation scheme on synthetic data generated from a clinical trial.
Section \ref{Sec:conclusion} concludes our work with some summarizing remarks.

 \section{Methodology}
\label{Sec:framework}
\subsection{Problem setup}

\noindent{\textbf{Basic setup}}.
We consider $K$ independent, existing studies in our setup.
Each study measures $n_k$ independently and identically distributed triples: (i) $Y_{k}\in \mathbb{R}^{n_k \times 1}$ is the observed outcome, (ii) $X_{k} \in \mathbb{R}^{n_k\times p_k}$ represents the  high-dimensional covariates where $p_k$ is allowed to be larger than $n_k$, and (iii) $D_{k}\in\mathbb{R}^{n_k\times s}$ is the $\mathbb{R}^s$-valued treatment variable. 
We denote the data from these studies by
$$\dcal_k = \left\{D_{k}, X_{k}, Y_{k}\right\},$$
for $k\in \{1,2,\cdots, K\}$.
For any $k$, we use $E_k$ to represent an index set of covariates with cardinality less than $p_k$. 
For any vector or matrix $A$ and an index set $E$, let $A_E$ be the sub-vector or sub-matrix containing the columns of $A$ indexed by the set $E$. 
We use the symbol $1_k$ and $0_k$ to represent a $\mathbb{R}^k$-valued vector of all ones and all zeroes, respectively, and use $\|v\|$ to represent the $\ell_2$ norm of a vector unless specified otherwise.

We begin with a simple model to describe the data in our studies.
Suppose that $\varepsilon_k \in \mathbb{R}^{n_k \times 1}$ are random errors that satisfy
$$\mathbb{E}(\varepsilon_k | D_k, X_k) = 0_{n_k},\quad  \text{Cov}(\varepsilon_k | D_k, X_k) =\sigma^2 I,  \quad k =1,2, \ldots, K,$$
where $I$ is the identity matrix.
We assume that the observations in our studies are drawn from the population model 
\begin{align}
\label{popn:modelk}
Y_k =  D_k \alpha + X_{k,E_0}\beta_{E_0} + \varepsilon_k, 
\end{align} 
where $\alpha \in R^s $ measures the treatment effect, $E_0$ indicates the covariates with nonzero effects, and  $\beta_{E_0}$ measures the impact of these covariates. Here, the term treatment effect refers to the coefficients of shared variables across multiple studies. This terminology is consistent with the high dimensional inference literature \cite{belloni2014inference, cattaneo2018inference, belloni2019valid, wang2020debiased} and with our current model setup, where the regression coefficients are often called treatment effects by researchers. 
Later in this section, we show that our introduced scheme generalizes to a variety of linear models, such as models with cross-study heterogeneity.

We can think of the first column of $X_k$ as a vector of all 1's to represent the intercept in the population model \eqref{popn:modelk}.
In practice, we center all the data vectors, and simply assume that $Y_k$ sums to zero, and so does every column of $D_k$ and $X_k$ without a loss of generality. 
Each existing study collects data for an extensive collection of covariates, and the treatment assignment is randomized conditional on these covariates. 
We further assume that the set of observed covariates, $X_k$, in each study include the ones indexed by $E_0$, that is, the covariates with nonzero effects in the population model are measured in each study. 
This is a rather strong assumption, requiring that the set of unobserved confounders is balanced in all the studies. 
Unobserved confounders in more general settings certainly deserve further attention.

Because our existing studies begin with a large number of covariates, we focus on the case where each study has used the LASSO to select a subset of potential confounders.
Denote by $E_k$ the active set of covariates selected by the LASSO in existing study $k$, and let $q_k=|E_k|$.
We assume that $E_0 \subset E_k$, which holds with probability tending to one as $n_k\to \infty$ under appropriate conditions.
Define $$E=\cup_{k=1}^{K} E_k$$ with cardinality equal to $q$.

Besides the existing studies, we also consider a validation study $\mathcal{V} = \left\{ D, X, Y\right\}$, which is independent of $\dcal_k$ for all $k=1,\ldots, K$.
The data in the validation study $\mathcal{V}$ contains $n$ measurements for the same outcome $Y \in \mathbb{R}^{n\times 1}$ and treatment variables $D \in \mathbb{R}^{n \times s}$ as our existing studies. 
Furthermore, $\mathcal{V}$ contains matched measurements for the potential confounders in the subset $E$, that are now represented by $X=X_E \in \mathbb{R}^{n \times q}$.
Clearly, the findings in our $K$ already existing studies provide a guideline for what covariates need to be included as possible confounders in the follow-up study.
We assume that the validation data is drawn from the same population model as \eqref{popn:modelk}, i.e., 
$$Y =  D \alpha + X_{E_0}\beta_{E_0} + \varepsilon,$$ 
where $\varepsilon$ is centered at $0_n$ and has covariance $\sigma^2 I$ given the treatment variable and covariates.

Fixing some more notation for the remaining paper, let $N_k= n+n_k$, and let $r_k= \frac{n_k}{N_k}$ in the rest of the paper.
In each existing study, we use the LASSO to estimate
\begin{equation}\label{eq:lasso}
\widehat{\gamma}^{\text{\;(L)}}_{k} = \begin{pmatrix} \widehat{\alpha}^{\text{\;(L)}}_{k} \\ \widehat{\beta}^{\text{\;(L)}}_{k}\end{pmatrix}=\underset{\alpha\in\mathbb{R}^s,\beta\in \mathbb{R}^{p_k} }{\arg\min}\left\{ \frac{1}{2r_k\sqrt{N_k}} ||Y_k -  D_k \alpha  -     X_k \beta  ||^2
+ || \Lambda_k \beta ||_1 \right\}, 
\end{equation}
$\widehat{\alpha}^{\text{\;(L)}}_{k}\in \mathbb{R}^s$, $\widehat{\beta}^{\text{\;(L)}}_{k} \in \mathbb{R}^{q_k}$ for $k=1,2,\cdots, K$, where $\Lambda_k \in R^{p_k \times p_k}$ is a diagonal matrix with the tuning parameters in the LASSO penalty in the diagonal entries. 
Denote by
\begin{equation}\label{eq:pooled-estimate:ls}
\hat{\gamma}_k =\begin{pmatrix}
\hat{\alpha}_k\\
\hat{\beta}_k
\end{pmatrix}
\end{equation} 
the least squares estimator, based on  
$\begin{bmatrix}X'_{k,E_k} & X'_{E_k}\end{bmatrix}'\in \mathbb{R}^{N_k \times q_k}$, \text{ and } $\begin{pmatrix} Y'_k & Y'\end{pmatrix}'\in \mathbb{R}^{N_k\times 1}$ after pooling observations from the validation study and existing study $k$.
\smallskip

\noindent{\textbf{Some extensions}}. 
The model in \eqref{popn:modelk} assumes that $\beta_{E_0}$, the effect of the true confounders, is the same across all the studies. 
Now we consider a more general setup with cross-study heterogeneity, and show that we can still work with our basic model with the same form as \eqref{popn:modelk}.
To see this, suppose that data in our existing studies are drawn from the  model
\begin{equation}
   Y_k =  D_k \alpha + X_{k,E_{0,k}}\beta_{E_{0,k}} + \varepsilon_k, 
   \label{popn:modelk:hetero}
\end{equation}
and data in our validation study are generated from the model
\begin{equation}
Y_0 =  D_0 \alpha + X_{0,E_{0,0}}\beta_{E_{0,0}} + \varepsilon_0,
\label{popn:model0:hetero}
\end{equation}
where $\mathbb{E}(\varepsilon_k | D_k, X_k) = 0_{n_k}$, $\text{Var}(\varepsilon_k | D_k, X_k) =\sigma^2 I$, for $k =0, \ldots, K$.
Let $\delta_{i,k}$ be a dummy variable that assumes the value $1$ if observation $i$ is in study $k$, and is $0$ otherwise.
Note that $\sum_{k=0}^K \delta_{i,k} =1$ for any observation $i$.
Thus, for a variable $v_i \in \mathbb{R}^t$, let
\[ \delta_{i,k}v_i = 
\begin{cases} 
      v_i & \text{ if  observation } i \in  \text{ study }k , \\
      0_t & \text{ otherwise}.
\end{cases}
\]

Given the above generating models, we can describe observation $i$ in our data, for the validation study as well as the $K$ existing studies, through the unified model
\begin{align*}
    y_i = &\ D'_i \alpha   +  Z'_{i, E_0}\gamma_{E_0} + \varepsilon_i,
\end{align*} 
where 
$$\gamma_{E_0}= \begin{pmatrix}  \beta_{E_{0,0}}' & \beta_{E_{0,1}}' &\cdots &\beta_{E_{0,K}}'  \end{pmatrix}', \text{ and } Z_{i, E_0} =\begin{bmatrix} (\delta_{i,0} X_{i, E_{0,0}})' & (\delta_{i,1} X_{i, E_{0,1}})' & \cdots & (\delta_{i,K} X_{i, E_{0,K}})' \end{bmatrix}'.$$
Because we can write the unified model in the same form as the model in \eqref{popn:modelk}, we stick to our basic model for ease of presentation and revisit the extensions after introducing our proposal.

\subsection{Debiasing through data carving: a first example}

We provide a first example to illustrate how we utilize data carving to construct an unbiased estimator of treatment effects.
Suppose that we have an existing study, i.e., $K=1$.
We let $\mathcal{D}_1$ contain a real valued covariate $X_1\in\mathbb{R}$ and a treatment variable $D_1$, i.e., $p_1=1$ and $s=1$. 
Both $X_1$ and $D_1$ in this example are standardized to have mean zero and variance one with correlation $\rho$.
Consider observing $n_1$ and $n$ observations in $\mathcal{D}_1 $ and $\mathcal{V}$ respectively, from the population model \eqref{popn:modelk} with $\beta_{E_0}=0$, i.e. $E_0 = \emptyset$, and with noise variance $\sigma^2=1$. 
For simplicity, we let $n=n_1$, which means that $N_1= 2n_1$, and $r_1 = \frac{1}{2}$ in this example. 
	 
Suppose that the model we select after solving the LASSO on $\mathcal{D}_1$ is the full model $E_1= \{1\}$ which includes the covariate $X_1$. 
That is, we model the real-valued outcome variable as independent realizations from the conditional distribution
\begin{equation}
	     \label{eg:working:model}
	     \mathrm{y}\ \lvert \ \mathrm{d}, \mathrm{x}_1 \sim N(\alpha \mathrm{d} + \beta \mathrm{x}_1,  1).
	 \end{equation}
We fit the selected model to the pooled data $\mathcal{D}_1 \cup \mathcal{V}$.
	 
Define $$\gamma_1 = \begin{pmatrix} \alpha & \beta \end{pmatrix}'.$$
Consistent with our notations, let $\hat\gamma_1$ be the least squares estimator for $\gamma_1$ using the pooled observations in $\mathcal{D}_1$ and $\mathcal{V}$.
Note that the likelihood in the selected model \eqref{eg:working:model} is
$$\ell_{N_1}(\gamma_1; \hat\gamma_1) \propto \exp\left(-\frac{N_1}{2}(\hat\gamma_1-\gamma_1)' \hat{\Sigma}_1 (\hat\gamma_1-\gamma_1)\right),$$  
if we ignored the impact of model selection bias.

To construct our carved estimator, we introduce a new variable which we call the randomization variable.
Denoted as
$$\omega_1= \begin{pmatrix} \omega_{1,1} & \omega_{1,2} \end{pmatrix}'\in \mathbb{R}^2,$$ 
the randomization variable that takes the value
\begin{equation*}
\begin{aligned}
	& \dfrac{\partial}{\partial\gamma_1}\Bigg(\dfrac{1}{2\sqrt{N_1}}\|Y_{1} - \alpha D_{1} - \beta X_{1}\|^2 + \dfrac{1}{2\sqrt{N_1}}\|Y - \alpha D - \beta X\|^2\\
	&\;\;\;\;\;\;\;\;\;\;\;\;\;\;\;\;\;\;\;\;\;\;\;\;\;\;\;\;\;\;\;\;\;\;\;\;\;\;\;\;\;\;\;\;\;\;\;\;\;\;\;\;\;\;\;\;\;\;\;\;\;\;\;-\dfrac{1}{2r_1\sqrt{N_1}}\|Y_{1} - \alpha D_{1} - \beta X_{1}\|^2\Bigg)\Bigg\lvert_{\hatG_1},
\end{aligned}
\end{equation*}
where $r_1=\frac{1}{2}$.
Observe that there is a simple equivalence between the Karush–Kuhn–Tucker (K.K.T.) mapping for \eqref{eq:lasso} and 
\begin{equation}
\label{lasso:randomized:rep}
     \underset{\alpha, \beta}{\text{minimize}} \;\; \dfrac{1}{2\sqrt{N_1}}\|Y -\alpha D  - \beta X \|^2  + \dfrac{1}{2\sqrt{N_1}}\|Y_1 -\alpha D_1 - \beta X_1 \|^2- \w_{1,1} \alpha -\w_{1,2} \beta + \lambda|\beta|,
\end{equation} 
which is a randomized version of the LASSO on the pooled data $\mathcal{D}_1 \cup \mathcal{V}$.
Under the selected model in \eqref{eg:working:model} and ignoring the impact of model selection, we have as $N_1\to \infty$: 
	 \begin{enumerate}
	 \setlength\itemsep{1em}
	     \item \label{prop:1} $\sqrt{N_1}(\widehat\gamma_1-\gamma_1)$ and $\omega_1$ jointly follow an Gaussian distribution such that $\omega_1$ is centered at $0$ with the covariance $\mathbb{E}[\hat\Sigma_1]$ 
	     \item \label{prop:2} $\omega_1$ is independent of $\sqrt{N_1}(\widehat\gamma_1-\gamma_1)\sim N_2(0_2, (\mathbb{E}[\widehat{\Sigma}_1])^{-1})$.
	 \end{enumerate}

Suppose for now, the limiting distributional properties listed under \ref{prop:1} and \ref{prop:2} hold exactly with $$\mathbb{E}[\hat\Sigma_1] = \begin{bmatrix} 1 & \rho \\ \rho & 1 \end{bmatrix}.$$ 
Our carved estimator can be constructed in two main steps.
We eliminate bias from model selection by conditioning on the selection event
\begin{equation}
	     \label{cond:event:1}
	\left\{ \widehat{E}_1 = E_1,  \text{sign}(\hatB_1) =s_{E_1} \right\}. 
	 \end{equation}

In Step 1, we recognize that
	$$
	    \hat\alpha_1^{\text{\;initial}}=\hat{\alpha}_1 - (1-\rho^2)^{-1} \dfrac{1}{\sqrt{N_1}}\cdot (\omega_{1,1}- \rho \omega_{1,2})
     $$
 is an unbiased, initial estimator that is independent of the selection through LASSO.
This leads us to observe that $\hat\alpha_1^{\text{\;initial}}$ is also unbiased for $\alpha_1$ conditional on the event in \eqref{cond:event:1}.

In Step 2, we improve upon the initial estimator through Rao-Blackwellization by conditioning further upon the complete sufficient statistic for $\gamma_1$. 
In our case, this statistic is $\widehat{\gamma}_1$, because of the basic fact that conditioning preserves the complete sufficient statistic, and that the sufficient statistic in the usual likelihood is the least squares estimator. 
Please see \citep{fithian2014optimal} for a formal proof of this fact. 
Thus, our carved estimator is given by
	 \begin{equation}
	 \begin{aligned}
	  \hat\alpha_1^{\text{\;carve}}
	  &= \mathbb{E}\Big[ \hat{\alpha}_1^{\text{initial}} \ \Big\lvert \hat{\gamma}_1, \widehat{E}_1 = E_1,  \text{sign}(\hatB_1) =s_{E_1}\Big],
	  \end{aligned}
	  \label{simple:carved}
	\end{equation}
	where we have conditioned upon $\hat{\gamma}_1$ alongside the event of selection in \eqref{cond:event:1}.
Details of both steps and a final expression for the carved estimator are provided in Proposition \ref{exact:UMVU:simple}.
Because 
$$\mathbb{E}\left[\widehat{\alpha}_1^{\text{\; carve}}\; \Big\lvert \; \widehat{E}_1 = E_1,  \text{sign}(\hatB_1) =s_{E_1}\right] = \alpha,$$
we also have $\mathbb{E}\left[\widehat{\alpha}_1^{\text{\; carve}}\right] = \alpha$.

A few remarks are in order.
First, the estimator in \eqref{simple:carved} can be written as
\begin{equation*}
\begin{aligned}
\hat\alpha_1^{\text{\;carve}} &= \widehat{\alpha}_1 - (1-\rho^2)^{-1} \dfrac{1}{\sqrt{N_1}} \cdot \mathbb{E}\Big[\omega_{1,1}- \rho \omega_{1,2} \;  \Big\lvert \;  \widehat{\gamma}_1,\widehat{E}_1 = E_1,  \text{sign}(\hatB_1) =s_{E_1}\Big],
\end{aligned}
\end{equation*}
In particular, we note that $\hat\alpha_1^{\text{\;carve}}$ applies an additive \textit{debiasing correction} to the simple least squares estimator $\widehat{\alpha}_1$.
The final expression for the carved estimator, as stated in Proposition \ref{exact:UMVU:simple}, is obtained by 
evaluating the conditional expectation in the debiasing correction.
Second, to construct the carved estimator, we used: (i) the least squares estimator in the selected model, (ii) the tuning parameter $\lambda$, and (iii) the sign of the LASSO estimator for the selected covariate $s_{E_1}$, which   depend on only some simple summary statistics from the existing study.

Next we extend our main ideas from this simple example to the more general setup.
Before proceeding, we provide some numerical evidence for the effectiveness of our new estimator.
We depict in Figure \ref{fig:boxplot-example} the result of a simulation for $n=100$,  $n_1 = 50$, and $1,000$ Monte Carlo samples.
We compare our carved estimator against the two popular alternatives that were discussed earlier in the introduction, (1) the estimator based on data splitting, which simply uses the validation data $\mathcal{V}$ for estimation, and (2) the debiased LASSO estimator based on a one-step bias correction to the LASSO estimator.
We elaborate on the generative scheme for the simulation as well as the implementation details for the other two alternatives in Section \ref{Sec:simulation}. 
Note that all the three estimators are centered around the true value $\alpha_1 = 1$. 
Especially, our new estimator removes the effect of selection bias, and has smallest variability among the three estimators.

	\begin{figure}
\centering        \includegraphics[width=0.50\textwidth]{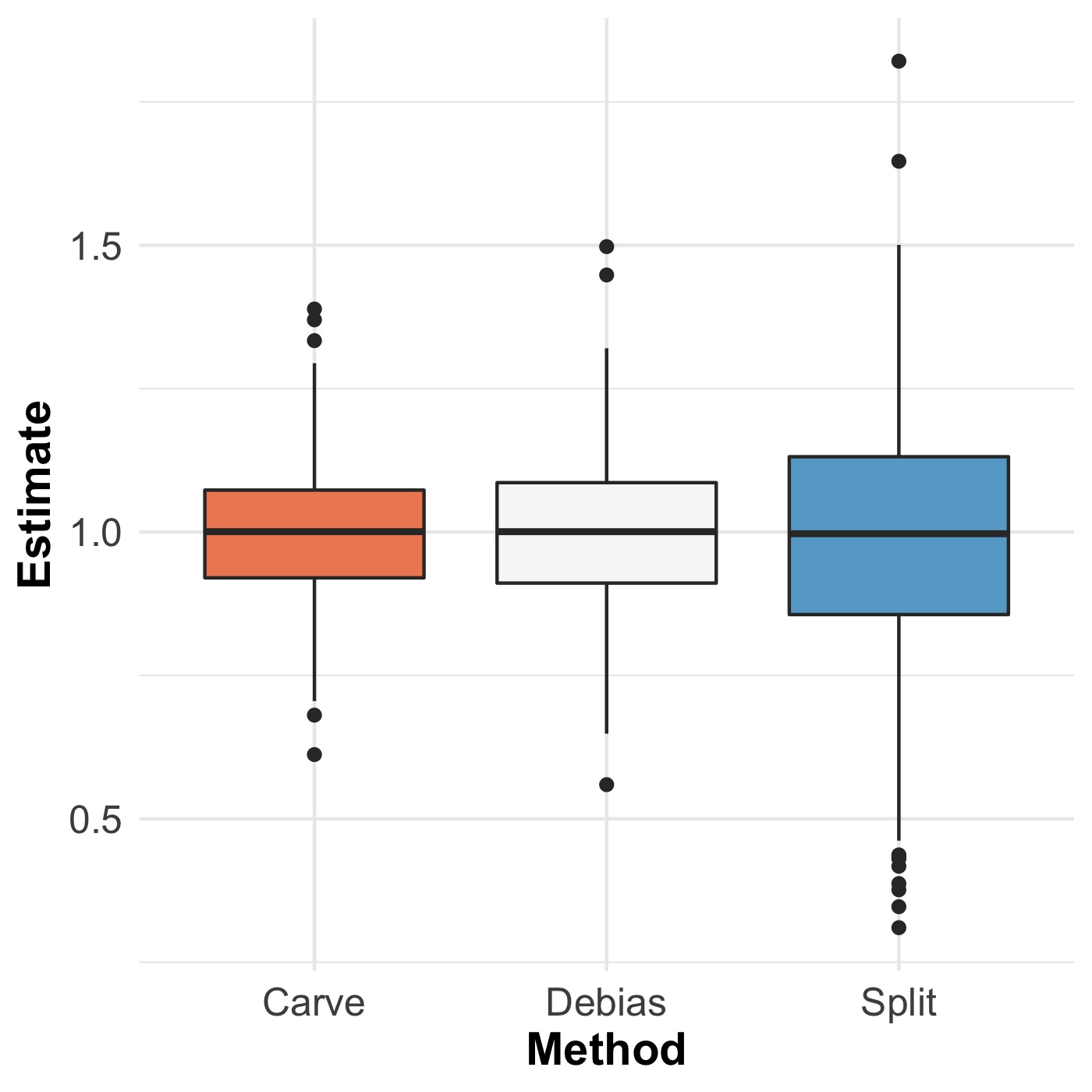}
        \caption{\label{fig:boxplot-example} Box-plots of the parameter estimates in the simulation experiment for Section \ref{Sec:framework}.}
    \end{figure}

\section{Carved estimator} 
\label{Sec:theory}

\subsection{Privacy-preserving aggregation}

We turn to our general setup, and outline our scheme to aggregate summary statistics from existing studies with data carving. 

We begin by specifying the summary statistics involved in the construct of our estimator:
\begin{enumerate}
\setlength\itemsep{1em}
	\item \textit{Summary from model selection in existing studies}: the support set of the LASSO estimator $E_k$, the penalty weights $\Lambda_{ E_k}$, and the signs of the (nonzero) LASSO estimator for the selected covariates $s_{E_k} = \text{sign}(\widehat{\beta}_k^{\text{\;(L)}})$. \label{summary:sel}
	\item \textit{Summary based on the first two moments from data in existing studies}: the sample covariance matrices from the existing study $k$, based on the selected model $E_k$ and the observed sample of size $n_k$, which are equal to
	\begin{align*}
	\widehat{\xi}_k = \left[ \begin{array}{cc}
	\hat{\xi}_{k, 01} \\
	\hat{\xi}_{k, 02} 
	\end{array} \right]   =&  \frac{1}{n_k} \left[
	\begin{array}{c}
	D_{k}'Y_k \\
	X_{k, E_k}'Y_k 
	\end{array}  \right], \\
	\hat{\Xi}_{k} =  \ \left[
	\begin{array}{cc}
	\hat{\Xi}_{k, 11} & \hat{\Xi}_{k, 12}\\
	\hat{\Xi}_{k, 21} & \hat{\Xi}_{k, 22} 
	\end{array} \right]   =  & \frac{1}{n_k} \left[
	\begin{array}{cc}
	D_k'D_k & D_k'X_{k, E_k}\\
	X_{k, E_k}'D_k & X_{k,  E_k}'X_{k, E_k}
	\end{array} \right].
	\end{align*} 
	Note that $\widehat{\xi}_k$ has two blocks, one $s$-dimensional and the other  $q_k$-dimensional. In a similar fashion,  $\hat{\Xi}_{k} $ is partitioned along the same dimensional structure. 
	\label{summary:data}
\end{enumerate}

First, we compute the least squares estimator 
\begin{align}\label{eq:pooled-estimate}
\hat{\gamma}_k =\begin{pmatrix}
\hat{\alpha}_k\\
\hat{\beta}_k
\end{pmatrix} = \begin{pmatrix}
n_k\hat{\Xi}_{k, 11} + D'D &  n_k\hat{\Xi}_{k, 12} + D'X_{E_k}\\ 
n_k\hat{\Xi}_{k, 12} +X'_{E_k}D &  n_k\hat{\Xi}_{k, 22}  + X_{E_k}'X_{E_k}
\end{pmatrix}^{-1}  \begin{pmatrix}
n_k  \hat{\xi}_{k, 01}  + D'Y\\
n_k  \hat{\xi}_{k, 02} +   X_{E_k}'Y
\end{pmatrix}.
\end{align}

For a fixed vector $u\in \mathbb{R}^k$ and for $v\in \mathbb{R}^k$, let $\log(1+ \frac{1}{uv})$ be a $\mathbb{R}^k$-valued vector whose $j$-th component equals $$\log\left(1+ \frac{1}{u_j v_j}\right).$$
Denote by 
\begin{equation}
\hat{\Sigma}_{k} = \frac{1}{N_k}\left(n_k \hat{\Xi}_{k}  + \begin{pmatrix} D & X_{E_k} \end{pmatrix}' \begin{pmatrix} D & X_{E_k} \end{pmatrix}\right),
\label{sample:pooled:covariance}
\end{equation}
the sample covariance matrix based on the selected covariates. 
We  solve $\hat{z}_k=\begin{pmatrix} \hat{z}'_{k,1} & \hat{z}'_{k,2} \end{pmatrix}'$ from the following convex optimization problem: 
\begin{equation}
\label{optimizer}
\begin{aligned}
& \underset{z_k\in\mathbb{R}^{s+q_k}}{\text{arginf}} \Bigg\{ (1-r_k)^{-1}r_k\frac{1}{2}\left(\sqrt{N_k}z_k-\sqrt{N_k}\widehat{\gamma}_k +  \hat{\Sigma}_{k}^{-1}\begin{pmatrix} 0' & (\Lambda_{E_k} s_{E_k})'\end{pmatrix}\right)'\hat{\Sigma}_{k} \\
&\;\;\;\;\;\;\;\;\;\;\;\;\;\;\;\;\;\;\;\;\;\;\;\;\;\;\;\;\;\;\; \left(\sqrt{N_k}z_k-\sqrt{N_k}\widehat{\gamma}_k +  \hat{\Sigma}_{k}^{-1}\begin{pmatrix} 0' & (\Lambda_{E_k} s_{E_k})'\end{pmatrix}\right)+ B_{s_{E_k}}\left(\sqrt{N_k}z_{k,2}\right)\Bigg\},
\end{aligned}
\end{equation}
where    $$B_{s_{E_k}}\left(\sqrt{N_k}z_{k,2}\right)= 1_k'\log\left(1+ \frac{1}{s_{E_k} \sqrt{N_k}z_{k,2}}\right).$$
Our carved estimator pools data from the existing study $k$ with the validation study, and is given by
\begin{align}\label{eq:gamma-carve}
\widehat{\alpha}_k^{\text{\; carve}} =\widehat{\alpha}_{k} + (1-  r_{k})^{-1}r_k \left(\widehat{\alpha}_k -  
\frac{1}{\sqrt{N_k}}e'_{1}\hat{\Sigma}_{k}^{-1}\begin{pmatrix} 0' & (\Lambda_{E_k} s_{E_k})'\end{pmatrix}' - e'_{1}\hat{z}_k\right). 
\end{align}

Finally, we propose the aggregated estimator by taking a simple average of the $K$ carved estimators as
\begin{align}\label{proposed estimate}
\tilde{\alpha}  = \frac{1}{K}\sum_{k=1}^{K} \widehat{\alpha}_k^{\text{\; carve}} . 
\end{align}

We conclude the section with some observations about our estimator.
\begin{remark}
If the sample size for each existing study, $n_k$, varies with $k$, we can replace the simple average in (\ref{proposed estimate}) by a weighted average, where the weights are proportional to $N_k$. 
\end{remark}

\begin{remark}
    Let
  $$Q= \max_{k\in \{1,\ldots, K\}} |E_k|$$
  denote the maximum size (cardinality) for the selected covariate sets across our existing studies. 
  Then, we note that the amount of information required for constructing the estimator in \eqref{proposed estimate} is only $O(Q^2)$.
\end{remark}

\begin{remark}
We revisit the population models in \eqref{popn:modelk:hetero} and \eqref{popn:model0:hetero} where we considered cross-study heterogeneity.
We observe that our estimator easily adapts to the following two scenarios. 
First, say $E_{0,0} \subseteq \cap_{k=1}^K E_{0,k}$.
In this case, we can proceed as prescribed above.
Second, suppose that $E_{0,0} \subseteq \cup_{k=1}^K E_{0,k}$. In this case, we assume that the union of the selected variables $E = \cup_{k=1}^K E_k$ is collected in all the existing studies.
We can now form our carved estimator by fitting the union model $E$ to the pooled data from existing study $k$ and the validation study.
\end{remark}

\subsection{Our debiasing approach}

In this section, we show how the debiasing approach in the first example can be generalized to our problem.
In line with the first example, our carved estimator $\widehat{\alpha}_k^{\text{\; carve}}$ is obtained by conditioning on the selection event
\begin{equation}
\label{sel:event:lasso}
\left\{\widehat{E}_k =  E_k, \;\text{sign}(\hatB_k) = s_{E_k}\right\},
\end{equation}
which we observe after solving \eqref{eq:lasso} on data $\dcal_k$.
As before, we work with the selected model which assumes that data for the outcome variable are drawn as independent realizations from the conditional distribution
$$\mathrm{y}\ \lvert \ \mathrm{d}, \mathrm{x}_{E_k} \sim N(\mathrm{d}'\alpha + \mathrm{x}_{E_k}'\beta_{E_k}, \sigma^2).$$ 
We proceed by fitting the selected model to $\dcal_k \cup \mathcal{V}$.
Since we work under a fixed $\sigma^2$ setting, we will assume $\sigma^2=1$ for the sake of simplicity.

To develop the theory, we let
\begin{equation}
\begin{aligned}
\omega_k &= \begin{pmatrix} (\omega_{k,1})' & (\omega_{k,2})' \end{pmatrix}'; \; \omega_{k,1}\in \mathbb{R}^s, \omega_{k,2}\in \mathbb{R}^{p_k}\\
&=\dfrac{\partial}{\partial\gamma}\Bigg(\dfrac{1}{2\sqrt{N_k}}\|Y_{k} -D_{k} \alpha  -  X_{k}\beta\|^2 + \dfrac{1}{2\sqrt{N_k}}\|Y -  D\alpha -  X\beta\|^2\\
&\;\;\;\;\;\;\;\;\;\;\;\;\;\;\;\;\;\;-\dfrac{1}{2r_k\sqrt{N_k}}\|Y_{k} - D_{k}\alpha -  X_{k}\beta\|^2\Bigg)\Bigg\lvert_{\hatG_k},
\label{randomization:gen}
\end{aligned}
\end{equation}
be our randomization variable, which is based on the LASSO solution $\hatG_k$.
Let $E_k^c$ denote indices of covariates that are not selected by the LASSO in existing study $k$.
In addition, we will also consider the statistic
$$\hat{\Gamma}_{k} = \dfrac{1}{N_k}\left\{X'_{k,E_k^c}(Y_k-D_k\hat\alpha_k- X_{k, E_k}\hat\beta_k) + X'_{E_k^c}(Y-D\hat\alpha_k- X_{E_k}\hat\beta_k) \right\}.$$
Note, we need not observe the covariates that were not selected by the LASSO in the validation study.
The statistics defined above only serve to provide a theoretical justification for our debiasing term.
Proofs of our technical results are provided in Appendix \ref{Sec:proofs}.

Lemma \ref{limiting:Gaussianity} notes that the variables $\widehat\gamma_k$, $\Gamma_k$ and
$\omega_k$ admit an asymptotic linear representation in the selected model, if we ignored the impact of model selection and treated $E_k$ as a fixed set. Formally, this implies that these variables are distributed as Gaussian variables in the limit as the sample size $N_k\to \infty$.

\begin{lemma}
	\label{limiting:Gaussianity}
	Let $\gamma_k =\begin{pmatrix} \alpha' & \beta'_{E_k}\end{pmatrix}'$ for a fixed subset $E_k$.
	Then, the following assertion holds:
	$$\begin{pmatrix} \sqrt{N_k}(\widehat\gamma_k-\gamma_k)' & \sqrt{N_k}\widehat\Gamma'_k & \omega'_k\end{pmatrix}'= \sqrt{N_k}\bar{T}_{N_k} + R_{N_k}, $$
	where (i) $\bar{T}_{N_k}$ is the average of $N_k$ i.i.d. variables with mean equal to $0_{2p_k+ 2s_k}$, (ii) $R_{N_k}= o_p(1)$, (iii)  $\hat{\gamma}_k$, $\hat{\Gamma}_{k}$ and $\omega_k$ are asymptotically independent, and (iv) $\omega_k$ is asymptotically centered at $0$ with the covariance
	$(1-r_k)r_k^{-1} \mathbb{E}[\mathfrak{G}_k]$, where $\mathfrak{G}_k =  {N_k}^{-1}\left\{\begin{pmatrix} D & X  \end{pmatrix}' \begin{pmatrix} D & X  \end{pmatrix} + \begin{pmatrix} D_k & X_k  \end{pmatrix}'  \begin{pmatrix} D_k & X_k  \end{pmatrix}\right\}.$
\end{lemma}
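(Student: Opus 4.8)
The plan is to derive a joint asymptotically linear (influence-function) representation for the stacked vector $\sqrt{N_k}\bigl((\hat\gamma_k-\gamma_k)'\ \Gamma_k'\ \omega_k'\bigr)'$, so that each block equals $\sqrt{N_k}$ times an average of $N_k$ independent, mean-zero summands plus an $o_p(1)$ remainder, and then to read off (iii)--(iv) from the covariance structure of those summands. Throughout I work under the unified superpopulation of the preceding subsection, so that the $N_k=n+n_k$ rows may be treated as i.i.d. draws (a dummy records study membership), which is what makes $\bar T_{N_k}$ a genuine i.i.d. average; I abbreviate the selected designs by $W_k=\begin{pmatrix}D_k & X_{k,E_k}\end{pmatrix}$, $W=\begin{pmatrix}D & X_{E_k}\end{pmatrix}$ and the full designs by $Z_k=\begin{pmatrix}D_k & X_k\end{pmatrix}$, $Z=\begin{pmatrix}D & X\end{pmatrix}$, and substitute the model $Y_k=W_k\gamma_k+\varepsilon_k$, $Y=W\gamma_k+\varepsilon$, valid on the event $E_0\subset E_k$. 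The final step is a multivariate Lindeberg--Feller CLT applied to $\bar T_{N_k}$, after which asymptotic independence follows from block-diagonality of the limiting covariance together with joint normality.

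For the first two blocks the representations are the standard least-squares expansions. From \eqref{eq:pooled-estimate} one has $\sqrt{N_k}(\hat\gamma_k-\gamma_k)=\hat\Sigma_k^{-1}N_k^{-1/2}(W_k'\varepsilon_k+W'\varepsilon)$, and replacing $\hat\Sigma_k^{-1}$ by $(\mathbb{E}[\hat\Sigma_k])^{-1}$ costs only $o_p(1)$ by the law of large numbers and the continuous mapping theorem; the leading term is an i.i.d. average with limiting covariance $(\mathbb{E}[\hat\Sigma_k])^{-1}$. Writing the combined-fit residual $Y_k-W_k\hat\gamma_k=\varepsilon_k-W_k(\hat\gamma_k-\gamma_k)$ (and similarly for the validation block) and again replacing sample cross-moments by their population limits gives an analogous expansion for $\sqrt{N_k}\Gamma_k$ as an i.i.d. average of inactive-direction scores. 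Because the active score driving $\hat\gamma_k$ and the inactive-direction score driving $\Gamma_k$ are orthogonal linear functionals of the same noise (the classical fitted-versus-omitted-residual orthogonality), their cross-covariance vanishes in the limit, supplying the $\hat\gamma_k\,$-vs-$\,\Gamma_k$ part of (iii).

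The crux is $\omega_k$. Differentiating the carving objective in \eqref{randomization:gen} gives $\omega_k=\bigl(r_k^{-1}-1\bigr)N_k^{-1/2}Z_k'(Y_k-Z_k\hatG_k)-N_k^{-1/2}Z'(Y-Z\hatG_k)$, evaluated at the LASSO solution $\hatG_k$. The key observation is that the Hessian of this objective is asymptotically negligible: its population limit is $N_k^{-1}\mathbb{E}[Z'Z]-\tfrac{1-r_k}{r_k}N_k^{-1}\mathbb{E}[Z_k'Z_k]=(1-r_k)G-\tfrac{1-r_k}{r_k}r_kG=0$, where $G$ is the common per-row Gram expectation. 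Hence the gradient is, to first order, insensitive to the evaluation point, and $\omega_k$ may be replaced by its value at the true $\gamma_k$ up to $o_p(1)$, namely $\tfrac{1-r_k}{r_k}N_k^{-1/2}Z_k'\varepsilon_k-N_k^{-1/2}Z'\varepsilon$. This is manifestly mean-zero, and a direct variance computation (the study-$k$ and validation pieces being independent) gives $\tfrac{(1-r_k)^2}{r_k}G+(1-r_k)G=\tfrac{1-r_k}{r_k}G=(1-r_k)r_k^{-1}\mathbb{E}[\mathfrak{G}_k]$, which is exactly (iv). The same sign flip between the study-$k$ and validation pieces makes the covariance of $\omega_k$ with the combined-data scores underlying $\hat\gamma_k$ and $\Gamma_k$ cancel term-by-term, delivering the remaining asymptotic independence in (iii).

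I expect the main obstacle to be controlling the $\omega_k$ remainder rigorously in the high-dimensional regime. The replacement of $\hatG_k$ by $\gamma_k$ rests on bounding the Hessian-times-error term $\nabla^2 g\,(\hatG_k-\gamma_k)$, where $\nabla^2 g=O_p(1)$ only as a fluctuation of an $(s+p_k)$-dimensional Gram difference; the argument must exploit that $\hatG_k$ is supported on $E_k$ (so only the active columns of the Hessian enter) together with LASSO estimation consistency $\|\hatG_k-\gamma_k\|=o_p(1)$ and a restricted-eigenvalue/compatibility bound on that block. One must also verify the Lindeberg condition for the joint CLT, and note explicitly that the inactive coordinates of $\omega_k$ and $\Gamma_k$, which reference $X_{E_k^c}$ not observed in $\mathcal{V}$, are theoretical constructs used only to certify the conditional law.
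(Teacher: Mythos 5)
Your proposal is correct and follows essentially the same route as the paper: the paper's proof simply writes down the explicit form of $\bar{T}_{N_k}$ — whose three blocks are exactly your least-squares score $\Sigma_k^{-1}(W'\varepsilon + W_k'\varepsilon_k)/N_k$, your residualized inactive-direction score, and your $\bigl(\tfrac{1-r_k}{r_k}Z_k'\varepsilon_k - Z'\varepsilon\bigr)/N_k$ representation of $\omega_k$ — and defers all remaining details to Proposition 4.1 of \cite{panigrahi2016integrative}. Your added justifications (the mean-zero Hessian of the carving objective permitting evaluation at $\gamma_k$, the sign-flip cancellation giving independence of $\omega_k$ from the scores, and the variance computation yielding $(1-r_k)r_k^{-1}\mathbb{E}[\mathfrak{G}_k]$) are precisely the structural facts that make the paper's stated representation valid.
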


Suppose that we condition this limiting Gaussian distribution on the selection event in \eqref{sel:event:lasso}.
This gives us an asymptotic conditional distribution.
Say that we denote this distribution by $\widetilde{L}_{N_k}$.
Next, Proposition \ref{exact:UMVU} derives a debiasing correction from this asymptotic conditional distribution as was done in the first example. 
Before stating this result, we let
$$\mu_{\mathcal{H}_0}(\alpha_0, \Sigma_0)$$
denote the first moment of a Gaussian variable with mean $\alpha_0$ and covariance $\Sigma_0$ that is truncated to the region $\mathcal{H}_0$. Further, let $\Delta(\eta_0)$ be the log-partition function for the related truncated Gaussian density at the natural parameter $$\eta_0= \Sigma_0^{-1}\alpha_0.$$
Finally, for $\widehat{\Sigma}_k$ defined in \eqref{sample:pooled:covariance}, let $\Sigma_k =\mathbb{E}[\widehat{\Sigma}_k]$ be the population covariance.

\begin{proposition}
	\label{exact:UMVU}
	Define the half-space
	$$\mathcal{H} =\left\{z=(z_1, z_2), \; z_1 \in \mathbb{R}^{s}, \; z_2 \in \mathbb{R}^{q_k}: \text{sign}(z_{2}) = s_{E_k}\right\}.$$
	Then, the estimator 
	\begin{equation*}
	\hat\alpha_k + (1-r_k)^{-1}{r_k}\left(\hat\alpha_k-\frac{1}{ \sqrt{N_k}}\cdot e'_1 {\Sigma}^{-1}_{k} \zeta_k  - \frac{1}{ \sqrt{N_k}}\cdot e'_1\mu_{\mathcal{H}}\left(\sqrt{N_k}\hat\gamma_k -\Sigma^{-1}_{k} \zeta_k, r_k^{-1}(1-r_k){\Sigma}^{-1}_k\right)\right)
	\end{equation*}
	is unbiased for $\alpha$ with respect to $\widetilde{L}_{N_k}$.
\end{proposition}

The estimator in Theorem \ref{exact:UMVU}, however, does not admit direct computations due to lack of readily available expressions for the moments of a truncated Gaussian variable. 
To this end, we use the plug-in estimator $\widehat{\Sigma}_k$ and the inverse of $\widehat{\Sigma}_k$ for estimating $\Sigma_k$ and its inverse, respectively, and apply a Laplace-type approximation to facilitate a manageable calculation for this estimator.
Deferring a rigorous justification of asymptotic unbiasedness to the next discussion, we complete the steps to numerically approximate our estimator through an easy-to-solve optimization.
For $$\eta_0=(1-r_k)^{-1}r_k\widehat{\Sigma}_k(\sqrt{N_k}\hat\gamma_k -\widehat\Sigma^{-1}_{k} \zeta_k),$$
a Laplace approximation grants us the following
\begin{equation*}
\begin{aligned}
& \Delta(\eta_0) \approx  (1-r_k)^{-1}r_k \dfrac{1}{2} (\sqrt{N_k}\hat\gamma_k - \widehat\Sigma^{-1}_{k} \zeta_k)' \widehat{\Sigma}_{k} (\sqrt{N_k}\hat\gamma_k - \widehat\Sigma^{-1}_{k} \zeta_k) \\
&-\displaystyle\inf_{\text{sign}(z_{k,2}) = s_{E_k}}\;\; (1-r_k)^{-1}r_k \dfrac{1}{2} (\sqrt{N_k}z_{k}- \sqrt{N_k}\hat\gamma_k + \widehat\Sigma^{-1}_{k} \zeta_k)'  \widehat{\Sigma}_{k} (\sqrt{N_k}z_{k}- \sqrt{N_k}\hat\gamma_k + \widehat\Sigma^{-1}_{k}
\zeta_k)\\
&=\displaystyle\sup_{\text{sign}(z_{k,2}) = s_{E_k}}\;\; (1-r_k)^{-1}r_k (\sqrt{N_k}z_{k})' \widehat{\Sigma}_{k} (\sqrt{N_k}\hat\gamma_k - \widehat\Sigma^{-1}_{k} \zeta_k)\\
&\;\;\;\;\;\;\;\;\;\;\;\;\;\;\;\;\;\;\;\;\;\;\;\;\;\;\;\;\;\;\;\;\;\;\;\;\;\;\;\;\;\;\;\;\;\;\;\;\;\;\;\;\;\;\;\;\;\;\;\;\;\;\;\;\;\;\;\;\;\;\;\;\;\;\;\;\;\;\;\;\;\;\;\;\;\;\;\;\;\;\;\;\;\;\;\;\;\;\;\;\;\;\;\;\;\;\;\;- (1-r_k)^{-1}r_k \dfrac{1}{2} (\sqrt{N_k}z_{k})'\widehat{\Sigma}_{k} \sqrt{N_k} z_{k}.
\end{aligned}
\end{equation*}
We replace the constrained optimization with the unconstrained version through a logarithmic barrier penalty to obtain
\begin{equation*}
\begin{aligned}
\Delta(\eta_0) &\approx \displaystyle\sup_{z_k} \;\; (1-r_k)^{-1}r_k (\sqrt{N_k}z_{k})' \widehat{\Sigma}_{k} (\sqrt{N_k}\hat\gamma_k - \widehat\Sigma^{-1}_{k} \zeta_k) \\
&\;\;\;\;\;\;\;\;\;\;\;\;\;\;\;\;\;\;\;\;\;\;\;\;\;\;\;\;\;\;\;\;\;\;\;\;\;\;\;\;\;\;\;\;\;\;\;\;\;\;\;\;\;- (1-r_k)^{-1}r_k \dfrac{1}{2} (\sqrt{N_k}z_{k})'\widehat{\Sigma}_{k} \sqrt{N_k}z_{k}- B_{s_{E_k}}(\sqrt{N_k} z_{k,2}).
\end{aligned}
\end{equation*}
Applying the above approximation to our debiasing correction in Proposition \ref{exact:UMVU} allows us to write
\begin{equation*}
\begin{aligned}
\mu_{\mathcal{H}}\left(\sqrt{N_k}\hat\gamma_k -\widehat\Sigma^{-1}_{k} \zeta_k, r_k^{-1}(1-r_k)\widehat{\Sigma}^{-1}_k\right) &=\nabla \Delta( (1-r_k)^{-1}r_k \widehat{\Sigma}_{k} (\sqrt{N_k}\hat\gamma_k - \widehat\Sigma^{-1}_{k} \zeta_k))\\
& \approx \sqrt{N}_k\widehat{z}_k,
\end{aligned}
\end{equation*}
where $\widehat{z}_k$ is the solution in \eqref{optimizer}.
This gives us the expression of our carved estimator in \eqref{eq:gamma-carve}.

\subsection{Asymptotic properties}

Our main result in the section, Theorem \ref{unbiasedness}, establishes the asymptotic unbiasedness of our carved estimator.

Suppose that we have $N$ i.i.d. realizations of the response $\mathrm{y}\in \mathbb{R}$, the covariates for all existing studies $\mathrm{x}\in \mathbb{R}^{p_1+ \cdots+p_k}$ and the treatment $\mathrm{d}\in \mathbb{R}^{s}$, such that
 $\mathrm{y}$ is drawn from the model in \eqref{popn:modelk} with  
 $$\begin{pmatrix} \alpha' & \beta'_{E_0}\end{pmatrix}'=\begin{pmatrix} (\alpha^{(N)})' & (\beta^{(N)}_{E_0})'\end{pmatrix}'.$$ 
Each of our existing studies, in particular, observes $N_k$ samples for the outcome and treatment along with a subset of covariates that are seen across all the studies. 
For the remaining section, we consider the following parameters
\begin{equation}
    \label{seq:parameters}
    \sqrt{N}\begin{pmatrix} (\alpha^{(N)})' & (\beta^{(N)}_{E_0})'\end{pmatrix}' = a_N \begin{pmatrix} (\alpha_0)' & (\beta_{0,E_0})'\end{pmatrix} 
\end{equation}
in our generation scheme, such that $\alpha_0$ and $\beta_{0,E_0}$ are constants and $a_N= o(\sqrt{N})$.
We state the regularity and moment conditions on the data-generating mechanism below. 

To start with, we impose asymptotic screening as a requirement on the model selection in the existing studies.
This requirement is necessary to ensure that the bias from missing a true covariate is only negligible with growing sample size.
The LASSO is an important example that satisfies the asymptotic screening requirement under conditions discussed at length by \cite{meinshausen2006high, bickel2009simultaneous}.
Similar conditions are required by estimators which are aggregated over selected models, see for example the work by \cite{schultheiss2021multicarving, fei2021estimation}.
\begin{condition}
    Assume that $\displaystyle\lim_{n_{k}\rightarrow \infty}\mathbb{P}(E_0\subseteq E_k)\rightarrow 1 $, for $k=1,2, \ldots,K$. 
    \label{screening:condition}
\end{condition}

The second condition below allows us to control the bias from the use of a Laplace-type approximation for our debiasing correction.
The first part of the condition assumes the existence of an exponential moment in a neighborhood of zero. 
This is required to justify a Laplace-type approximation for probabilities involving the mean of $N$ i.i.d. variables, which in our case is $\bar{T}_{N_k}$ (as stated in Lemma \ref{limiting:Gaussianity}) by setting $N= N_k$.
The second part of the condition is required to extend this approximation to asymptotically linear variables with an added $o_p(1)$ error term, which we see in the left-hand side of the representation in Lemma \ref{limiting:Gaussianity}.
\begin{condition}
\label{moment:condition:0}
Fix $$\mathrm{V}= ( \mathrm{y} - \alpha'\mathrm{d}  -\beta'_{E_0}\mathrm{x}_{E_0})\cdot \begin{pmatrix} \mathrm{d}' & \mathrm{x}'\end{pmatrix}',$$
for a fixed subcollection of our covariates containing $E_0$.
We assume that
$$\mathbb{E}\left[\exp\left(\eta \|\mathrm{V}\|\right)\right] <\infty$$
for some $\eta\in \mathbb{R}^{+}$.
For the observed set of covariates $E_k$ in the existing study k, consider the linearizable representation in Lemma \ref{limiting:Gaussianity}. 
We assume that the remainder term $R_{N_k}$ satisfies:
\begin{equation*} 
\displaystyle\lim_{N_k\to \infty}  \dfrac{1}{a_{N_k}^2}\cdot \log \mathbb{P}\left[a_{N_k}^{-1}\| R_{N_k} \| > \epsilon \right] =- \infty \ \ \ \ \ \ \ \text{ for every } \epsilon >0.
 \end{equation*}
\end{condition}

The probability of selection can be written as the probability of a polyhedron:
$$
\mathbb{P}\left[\widehat{E}_k =  E_k, \;\text{sign}(\hatB_k) = s_{E_k}\right]= \mathbb{P}\left[A\begin{pmatrix} \sqrt{N_k}\widehat\gamma'_k & \sqrt{N_k}\widehat\Gamma'_k & \omega'_k\end{pmatrix}' + o_p(1) \leq b\right],
$$
where $A$ and $b$ are fixed matrices. The above   characterization for the selection event has been proved in Proposition 4.2 of \cite{panigrahi2016integrative}.
Our next condition allows us to replace the log-probability of selection with the log-probability of the polyhedron that is determined by matrices $A$ and $b$.
In other words, this condition allows us to ignore the $o_p(1)$ remainder in the characterization for our selection event.
\begin{condition}
\label{prob:sel:condition:0}
We assume that the probability of selection satisfies
$$\lim\dfrac{1}{a_{N_k}^2}\left\{\log \mathbb{P}\left[\widehat{E}_k =  E_k, \;\text{sign}(\hatB_k) = s_{E_k}\right]-\log \mathbb{P}\left[A\begin{pmatrix} \sqrt{N_k}\widehat\gamma'_k & \sqrt{N_k}\widehat\Gamma'_k & \omega'_k\end{pmatrix}' \leq b\right] \right\}=0.$$
\end{condition}

The final condition bounds the deviation of the sample estimate $\widehat{\Sigma}_k$ from the population parameter $\mathbb{E}[\widehat{\Sigma}_k]$.
\begin{condition}
For the selected set $E_k$, we assume that
$\mathbb{E}[\|\widehat\Sigma_{k}-\Sigma_k\|_{\text{op}}] = O(1)$, where $\| M \|_{\text{op}}$ denotes the operator norm of the matrix $M$.
\label{operator:norm:bdd:0}
\end{condition}

For the remaining section, consider the parameters in \eqref{seq:parameters} by fixing $N= N_k$.
\begin{theorem}
	\label{unbiasedness}
	Let $\widehat{\alpha}_k^{\text{\; carve}}$ be defined according to \eqref{eq:gamma-carve}. 
	Under the conditions in \ref{screening:condition}, \ref{moment:condition:0}, \ref{prob:sel:condition:0} and \ref{operator:norm:bdd:0}, 
	we have $$\mathbb{E}\left[ \|\sqrt{N_k}(\widehat{\alpha}_k^{\text{\; carve}} - \alpha) \|^2  \right]\leq  a_{N_k}^{2} C,$$ where $C$ is a constant.
\end{theorem}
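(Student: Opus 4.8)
The plan is to compare our carved estimator in \eqref{eq:gamma-carve} against the exact, conditionally unbiased estimator delivered by Theorem \ref{exact:UMVU}, and to show that the gap between the two is negligible at the exponential scale $a_{N_k}^2$. Writing $\zeta_k = \begin{pmatrix} 0' & (\Lambda_{E_k} s_{E_k})' \end{pmatrix}'$ and letting $\widehat{\alpha}_k^{\star}$ denote the exact estimate from Theorem \ref{exact:UMVU}, I would first decompose
\[
\sqrt{N_k}\left(\widehat{\alpha}_k^{\text{\; carve}} - \alpha\right) = \sqrt{N_k}\left(\widehat{\alpha}_k^{\star} - \alpha\right) + \sqrt{N_k}\left(\widehat{\alpha}_k^{\text{\; carve}} - \widehat{\alpha}_k^{\star}\right),
\]
and control the two summands separately in $L^2$ via $\|u+v\|^2 \le 2\|u\|^2 + 2\|v\|^2$. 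By Theorem \ref{exact:UMVU}, the first summand is unbiased with respect to the conditional law obtained from Lemma \ref{limiting:Gaussianity}; since that law is asymptotically Gaussian with conditional covariance of order $r_k^{-1}(1-r_k)\Sigma_k^{-1} = O(1)$ while both its location and the truncated-Gaussian correction $\mu_{\mathcal{H}}$ scale like $a_{N_k}$, its second moment is $O(a_{N_k}^2)$.

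The bulk of the work is the approximation error $\sqrt{N_k}(\widehat{\alpha}_k^{\text{\; carve}} - \widehat{\alpha}_k^{\star})$. Inspecting \eqref{eq:gamma-carve} and Theorem \ref{exact:UMVU}, this error equals $(1-r_k)^{-1}r_k\, e_1'$ applied to two pieces: (i) the plug-in error $N_k^{-1/2}(\Sigma_k^{-1} - \widehat{\Sigma}_k^{-1})\zeta_k$ in the fixed offset, and (ii) the difference $N_k^{-1/2}\mu_{\mathcal{H}}(\sqrt{N_k}\widehat{\gamma}_k - \Sigma_k^{-1}\zeta_k, r_k^{-1}(1-r_k)\Sigma_k^{-1}) - \widehat{z}_k$ between the exact truncated-Gaussian first moment $\mu_{\mathcal{H}} = \nabla\Delta$ and the solution $\widehat{z}_k$ of the barrier problem \eqref{optimizer}. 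Piece (i) is handled directly by Condition \ref{operator:norm:bdd:0}, which bounds $\mathbb{E}\|\widehat{\Sigma}_k - \Sigma_k\|_{\text{op}}$ and hence the perturbation of the matrix inverse. For piece (ii) I would invoke the Laplace/large-deviation machinery assembled in the preceding subsection: the exponential-moment assumption in Condition \ref{moment:condition:0} justifies a Laplace approximation of the log-partition function $\Delta$ of the i.i.d. mean $\bar{T}_{N_k}$ of Lemma \ref{limiting:Gaussianity} with error that is exponentially small at scale $a_{N_k}^2$; the second part of Condition \ref{moment:condition:0} upgrades this from $\bar{T}_{N_k}$ to the asymptotically linear variables by absorbing the remainder $R_{N_k}$; and Condition \ref{prob:sel:condition:0} lets me replace the true selection event by its polyhedral surrogate without disturbing the rate. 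Because the objective in \eqref{optimizer} is strongly convex (the quadratic form in $\widehat{\Sigma}_k$ plus the logarithmic barrier $B_{s_{E_k}}$), the control on $\Delta$ transfers to its gradient, giving $\sqrt{N_k}\widehat{z}_k = \nabla\Delta + o(a_{N_k})$ while the same Lipschitz-in-the-matrix argument simultaneously absorbs the plug-in of $\widehat{\Sigma}_k$ for $\Sigma_k$.

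Combining the two summands through the elementary inequality above, each contributes at order $a_{N_k}^2$, which yields $\mathbb{E}[\|\sqrt{N_k}(\widehat{\alpha}_k^{\text{\; carve}} - \alpha)\|^2] \le a_{N_k}^2 C$ for a constant $C$ depending only on $r_k$, $s$, and the population quantities. I expect the main obstacle to be piece (ii): quantifying the Laplace approximation not at the level of the value $\Delta(\eta_0)$ but at the level of its gradient $\mu_{\mathcal{H}} = \nabla\Delta$, and doing so in an $L^2$ sense rather than merely in probability. This calls for a quantitative, derivative-tracking Laplace expansion of a truncated Gaussian integral, married to the strong convexity of \eqref{optimizer} so that the control on $\Delta$ propagates cleanly to $\widehat{z}_k$, all while the matrix plug-in governed by Condition \ref{operator:norm:bdd:0} and the selection-event surrogate governed by Condition \ref{prob:sel:condition:0} are kept at or below the target rate $a_{N_k}^2$.
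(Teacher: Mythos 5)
Your plan assembles the right ingredients (strong convexity, the Laplace/large-deviation conditions, Condition \ref{operator:norm:bdd:0} for the plug-in), and your decomposition into an ``exact UMVU'' term plus an approximation error is sensible, but it contains a genuine gap: the step you yourself flag as the main obstacle --- controlling $\sqrt{N_k}\widehat{z}_k - \mu_{\mathcal{H}} = \sqrt{N_k}\widehat{z}_k - \nabla\Delta$ in $L^2$ at rate $a_{N_k}^2$ --- is never actually carried out, and it is precisely the substance of the proof. A ``quantitative, derivative-tracking Laplace expansion of a truncated Gaussian integral'' is not supplied by any of the stated conditions, which only control \emph{values} of log-probabilities (Conditions \ref{moment:condition:0} and \ref{prob:sel:condition:0} give value-level large-deviation limits, as formalized in Proposition \ref{conv:log:partition}); nothing in your outline converts value-level control into gradient-level control. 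The paper's proof resolves exactly this point with a convex-duality device you do not identify: it represents the carved estimator \emph{exactly} as a gradient map, $a_{N_k}^{-1}\sqrt{N_k}\widehat{\alpha}_k^{\text{\; carve}} = e_1'\widehat{\Sigma}_k^{-1}\nabla D_{N_k}\bigl(a_{N_k}^{-1}\sqrt{N_k}\widehat{\gamma}_k\bigr) = e_1'\widehat{\Sigma}_k^{-1}\nabla\widetilde{\mathcal{P}}_{N_k}^{-1}\bigl(a_{N_k}^{-1}\sqrt{N_k}\widehat{\gamma}_k\bigr)$, where $\widetilde{\mathcal{P}}_{N_k}$ is the convex conjugate of the barrier objective and is strongly convex with index $(1-r_k)\lambda_{\max}^{-1}$. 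Strong convexity makes $\nabla\widetilde{\mathcal{P}}_{N_k}^{-1}$ Lipschitz, which reduces the whole problem to comparing $a_{N_k}^{-1}\sqrt{N_k}\widehat{\gamma}_k$, $\nabla\widetilde{\mathcal{P}}_{N_k}(\widehat{\Sigma}_k\alpha_0)$, $\nabla\widetilde{\mathcal{P}}_{N_k}(\Sigma_k\alpha_0)$ and $\nabla\widetilde{\mathcal{P}}_{0,N_k}(\Sigma_k\alpha_0)$ at \emph{fixed} arguments; the troublesome term $\|\nabla\widetilde{\mathcal{P}}_{N_k}(\Sigma_k\alpha_0)-\nabla\widetilde{\mathcal{P}}_{0,N_k}(\Sigma_k\alpha_0)\|$ then vanishes because pointwise convergence of convex functions (Proposition \ref{conv:log:partition}) implies convergence of their gradients --- no derivative-tracking Laplace expansion is ever needed.

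A secondary, related weakness: you bound the first summand $\sqrt{N_k}(\widehat{\alpha}_k^{\star}-\alpha)$ ``by Theorem \ref{exact:UMVU},'' but that theorem asserts unbiasedness only with respect to the \emph{limiting} conditional law from Lemma \ref{limiting:Gaussianity}, whereas the expectation in Theorem \ref{unbiasedness} is under the actual sampling distribution. Transferring the $O(a_{N_k}^2)$ bound from the idealized law to the true law requires the same large-deviation machinery you reserve for piece (ii), so this term cannot be dispatched in one line either. In the paper's argument this issue never arises separately, because the exact estimator is never introduced as an intermediate object: the analogous role is played by $\nabla\widetilde{\mathcal{P}}_{0,N_k}(\Sigma_k\alpha_0)$, the conditional mean under the exact law, and the corresponding term is an $O(1)$ conditional-variance term within the single Lipschitz chain of inequalities.
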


The asymptotic variance of our estimator $\sqrt{N_k}\widehat{\alpha}_k^{\text{\; carve}}$, based on the observed Fisher information matrix, is
$$\widehat{V}_{E_k}^{\; \text{carve}}= e_1'\left((1-r_k)^{-1}\hat{\Sigma}^{-1}_{k}  - (1-r_k)^{-2}r^2_k\left((1-r_k)^{-1}r_k\hat{\Sigma}_{k} + \nabla^2 B_{s_{E_k}}(\sqrt{N_k} \widehat{z}_k)\right)^{-1}\right)e_1.$$
The expression for the estimated variance is detailed out in Proposition \ref{Fisher:info}, under Section \ref{Sec:proofs}.
Lemma \ref{efficiency:gain:K1} then quantifies the relative gain in variance of our estimator over splitting for each study $k$. 
Here, $\widehat{V}^{\; \text{split}}_{E_k}$ denotes the estimated variance of the least squares estimate when refitted on the validation data alone.
Let $\widehat{V}_{j,E_k}^{\; \text{carve}}$ and $\widehat{V}^{\; \text{split}}_{j,E_k}$ be the $j$-th diagonal entry of the corresponding $s$-dimensional matrices, respectively. 

\begin{lemma}
	\label{efficiency:gain:K1}
	Let $r_k$ be the ratio $\frac{n_k}{N_k}$, and let $B_{\text{max}}$ be the maximum value of the $(\mathbb{R}^+)^{s+q_k}$-valued vector
	$$(s_{E_k} \sqrt{N_k}\widehat{z}_k)^{-2} - (1+s_{E_k} \sqrt{N_k}\widehat{z}_k)^{-2},$$ and let $\lambda_{\text{min}}$ be the smallest eigen value of $\widehat{\Sigma}_k$. 
	Then, the following holds for $j\in \{1,\ldots, K\}$:
	$$\left(\widehat{V}^{\; \text{split}}_{j, E_k}\right)^{-1} (\widehat{V}^{\; \text{split}}_{j, E_k} - \widehat{V}^{\; \text{carve}}_{j, E_k}) \geq (1-r_k)^{-1} r^2_k \left((1-r_k)^{-1} r_k+  B_{\text{max}} \lambda_{\text{min}}^{-1} \right)^{-1}.$$
\end{lemma}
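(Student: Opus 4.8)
The plan is to reduce the claimed inequality to a single Löwner-order comparison between the Hessian of the barrier $B_{s_{E_k}}$ and a multiple of the pooled covariance $\widehat{\Sigma}_k$, and then invoke the antitonicity of matrix inversion. I would begin from the two variance expressions. The split estimator refits the validation sample alone; after rescaling by $\sqrt{N_k}$ and plugging in $\widehat{\Sigma}_k$ for the population covariance $\Sigma_k$, the standard ordinary least squares variance on the $n$ validation samples becomes $N_k/n = (1-r_k)^{-1}$ times $\widehat{\Sigma}_k^{-1}$, i.e. $\widehat{V}^{\text{split}}_{E_k} = (1-r_k)^{-1}\widehat{\Sigma}_k^{-1}$, which is exactly the leading term of the carved variance stated before the lemma. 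Writing $M := (1-r_k)^{-1}r_k\widehat{\Sigma}_k + \nabla^2 B_{s_{E_k}}(\sqrt{N_k}\widehat{z}_k)$, the carved variance subtracts the correction $(1-r_k)^{-2}r_k^2 M^{-1}$, so extracting the $j$-th diagonal via $e_j'(\cdot)e_j$ gives $\widehat{V}^{\text{split}}_{j,E_k} - \widehat{V}^{\text{carve}}_{j,E_k} = (1-r_k)^{-2}r_k^2\, e_j'M^{-1}e_j$. Dividing by $\widehat{V}^{\text{split}}_{j,E_k} = (1-r_k)^{-1}e_j'\widehat{\Sigma}_k^{-1}e_j$, the relative gain equals $(1-r_k)^{-1}r_k^2\,(e_j'M^{-1}e_j)/(e_j'\widehat{\Sigma}_k^{-1}e_j)$.

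It then suffices to lower-bound the ratio $(e_j'M^{-1}e_j)/(e_j'\widehat{\Sigma}_k^{-1}e_j)$ by $c^{-1}$, where $c := (1-r_k)^{-1}r_k + B_{\text{max}}\lambda_{\text{min}}^{-1}$. I would deduce this from the matrix inequality $M \preceq c\,\widehat{\Sigma}_k$: since both sides are positive definite and inversion reverses the Löwner order, this gives $M^{-1}\succeq c^{-1}\widehat{\Sigma}_k^{-1}$, and sandwiching by $e_j'(\cdot)e_j$ yields $e_j'M^{-1}e_j \geq c^{-1}e_j'\widehat{\Sigma}_k^{-1}e_j$ for every coordinate $j$. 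Combining with the previous display produces precisely the stated bound $(1-r_k)^{-1}r_k^2\,c^{-1} = (1-r_k)^{-1}r_k^2\big((1-r_k)^{-1}r_k + B_{\text{max}}\lambda_{\text{min}}^{-1}\big)^{-1}$.

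The heart of the argument, and the step I expect to require the most care, is establishing $\nabla^2 B_{s_{E_k}}(\sqrt{N_k}\widehat{z}_k) \preceq B_{\text{max}}\lambda_{\text{min}}^{-1}\widehat{\Sigma}_k$, which is what upgrades $(1-r_k)^{-1}r_k\widehat{\Sigma}_k \preceq M$ into the two-sided bound $M \preceq c\widehat{\Sigma}_k$. I would first compute the Hessian explicitly: because $B_{s_{E_k}}(u) = 1_k'\log(1 + 1/(s_{E_k}u))$ is separable and depends only on the $z_{k,2}$ block, $\nabla^2 B_{s_{E_k}}(\sqrt{N_k}\widehat{z}_k)$ is block-diagonal, vanishing on the $s$-dimensional treatment block and diagonal on the $q_k$-dimensional selected block, with $i$-th diagonal entry $(s_{E_k,i}\sqrt{N_k}\widehat{z}_{k,2,i})^{-2} - (1+s_{E_k,i}\sqrt{N_k}\widehat{z}_{k,2,i})^{-2}$ (the factor $s_{E_k,i}^2 = 1$ from the chain rule). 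By the definition of $B_{\text{max}}$ these entries are nonnegative and bounded above by $B_{\text{max}}$, so $0 \preceq \nabla^2 B_{s_{E_k}}(\sqrt{N_k}\widehat{z}_k) \preceq B_{\text{max}} I$.

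Finally I would close the chain using the definition of $\lambda_{\text{min}}$ as the smallest eigenvalue of $\widehat{\Sigma}_k$, which gives $\widehat{\Sigma}_k \succeq \lambda_{\text{min}}I$, equivalently $I \preceq \lambda_{\text{min}}^{-1}\widehat{\Sigma}_k$. Hence $\nabla^2 B_{s_{E_k}}(\sqrt{N_k}\widehat{z}_k) \preceq B_{\text{max}} I \preceq B_{\text{max}}\lambda_{\text{min}}^{-1}\widehat{\Sigma}_k$, and adding $(1-r_k)^{-1}r_k\widehat{\Sigma}_k$ to both sides yields $M \preceq c\widehat{\Sigma}_k$, completing the reduction. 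The only subtlety beyond this bookkeeping is ensuring that $M$ and $\widehat{\Sigma}_k$ are both positive definite so that antitonicity of inversion applies; this is guaranteed since $\lambda_{\text{min}} > 0$ makes $\widehat{\Sigma}_k$ positive definite and $\nabla^2 B_{s_{E_k}}(\sqrt{N_k}\widehat{z}_k) \succeq 0$ keeps $M$ positive definite as well.
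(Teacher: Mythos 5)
Your proof is correct, and it follows exactly the route the paper intends but leaves implicit: the paper states Lemma \ref{efficiency:gain:K1} immediately after the carved variance formula $\widehat{V}^{\;\text{carve}}_{E_k}$ (derived in Proposition \ref{Fisher:info}) and never writes out the comparison itself, which is precisely the computation you supply. Your identification of $\widehat{V}^{\;\text{split}}_{E_k}=(1-r_k)^{-1}\widehat{\Sigma}^{-1}_k$ as the leading term of the carved variance, the diagonal Hessian entries $(s u)^{-2}-(1+su)^{-2}$ matching the paper's definition of $B_{\text{max}}$, and the Löwner bound $M \preceq \left((1-r_k)^{-1}r_k + B_{\text{max}}\lambda_{\text{min}}^{-1}\right)\widehat{\Sigma}_k$ followed by antitonicity of inversion are all sound and reproduce the stated bound exactly.
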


Clearly, our carved estimator $\widehat{\alpha}_k^{\text{\; carve}}$ dominates the split estimators in variance. 
The variance of the averaged carved estimator, however, involves correlations between the refitted estimators from the selected models on any pair of existing studies in addition to the variances from each study. 
Heuristically, the correlation between a pair of split estimators, using the validation data alone, is expected to be larger than the carved counterpart.
This is because the carved estimator also uses  information from the independent samples in this pair of existing studies.
Consider, for instance, the situation when each study reports the same model
$$E_1=E_2=\cdots= E_k.$$
Let the bound on the right-hand side of Lemma \ref{efficiency:gain:K1} be equal to $B_k$. 
Comparing the averaged carved estimator in \eqref{proposed estimate} with the split estimator, the difference in the two variances is seen to be bounded below by
$$
\frac{1}{K^2}\left(\sum_{k=1}^K \frac{B_k}{1-B_k} \widehat{V}_{E_k}^{\; \text{carve}} +  2\sum_{k_1< k_2} \left(\frac{1}{\sqrt{(1-B_{k_1})(1-B_{k_2})}} -1\right) \sqrt{\widehat{V}_{E_{k_1}}^{\; \text{carve}}}\sqrt{\widehat{V}_{E_{k_2}}^{\; \text{carve}}}\right).
$$
Note, this bound is strictly greater than $0$ since $B_k<r_k<1$ for $k\in 1,\ldots,K$.

Our simulation studies in the next section empirically support the gain in efficiency that we claim for our carved estimator.

\begin{remark}
In this section, we introduced a new way to remove bias from the refitted least squares estimator in a selected model and aggregate summary statistics from existing studies for unbiased estimation of treatment effects.
More generally, our new estimator extends to the  class of generalized linear models (GLMs).
In Appendix \ref{Appendix:A} ,we develop an extension of our proposal to the widely used logistic regression model, which is a concrete instance in the class of GLMs.  
\end{remark}

\section{Simulation studies}\label{Sec:simulation}

We conduct simulation studies to evaluate the finite-sample performance of our new estimator.
The outcome variables, in the existing and validation studies, are generated through simple linear models as follows
\begin{align}\label{eq:Simulation-no-D-model}
&\nonumber Y_k = \alpha D_k+ X_{k, E_0}\beta_{E_0} + \varepsilon_k, \quad \varepsilon_k \sim N(0, \sigma^2_{\varepsilon} I_{n_k}), \quad \text{for }k=1,2, \ldots,K,\\
& Y = \alpha D+ X_{E_0}\beta_{E_0} + \varepsilon, \quad \varepsilon \sim N(0, \sigma^2_{\varepsilon}I_{n}).
\end{align}
We set the coefficients $\alpha = 1$ and $\beta_{E_0} = (1.5, 1, 1, 1, 1)'$, i.e., for each study $k$, $X_{k, E_0}$ is the first 4 columns of $X_k$. 
The dimension of the covariates varies with $k$ as $p_k = 400 + 20k$ and $p=500$. 
Our sample sizes for the existing studies are $n_1 = \ldots = n_K= 100$, where $K \in \{ 2,3,5,10\} $; the validation study has $n=50$ samples. 
Our data is generated under two signal-to-noise-ratio values by using two values of  $\sigma_{\varepsilon} \in \{ 2,4\}$.

We summarize the performance of our carved estimator in three different settings.
\begin{enumerate}
\setlength\itemsep{1em}
    \item \label{setting:1} Setting I. \; In the first setting, there is a moderate degree of correlation between the treatment assignment and the covariates. We generate $(D_k, X_k)$ and $(D, X)$ from a multivariate normal distribution $ N(0,\Sigma)$ where $\Sigma = \big( \Sigma_{jl}\big)_{j,l=1}^{p_k+1}$ and $\Sigma_{jl} = 0.5^{|j-l|}$. 
    \item \label{setting:2} Setting II. \; The second setting generates highly correlated treatment and covariates  by first generating 
    \begin{align}\label{eq:Simulation-D-model}
     & D = X_{ M_0}'\gamma_{M_0} + \nu,\ \nu \sim N(0, \sigma_{\nu}^2I_{n}), \quad  D_k = X_{k, M_0}'\gamma_{M_0} + \nu_k,\ \nu_k \sim N(0, \sigma_{\nu}^2I_{n_k}),
     \end{align} 
     followed by drawing the response according to \eqref{eq:Simulation-no-D-model}.
     In the model \eqref{eq:Simulation-D-model} for generating the treatment variable, we fix $M_0 = \{5,6\}$, $\gamma_{M_0} = (1,1)'$, and fix the noise variance $\sigma_{\nu}^2 = 0.1$. 
     \item \label{setting:3} Setting III. \; In the final setting, we vary the coefficients of our covariates for generating the data in the existing and validation studies. 
     We simulate the shift in covariates across the existing and validation data as follows.
     We draw our data according to the generating scheme described for Setting II, except now we set the value of covariate coefficients $\beta_{E_0} = (1.5, 1, 1, 1, 0)'$ in model \eqref{eq:Simulation-no-D-model} to draw the response in our validation study. That is, the coefficient of one of the confounders has changed between the existing studies and the validation study in the follow-up stage.
\end{enumerate}

We compare the performance of our carved estimator with the split-and-aggregate estimate and the aggregated debiased LASSO estimator.
As remarked early on, an off-the-shelf option available for estimation is a simple split-and-aggregate estimator.
This estimator is obtained by refitting the selected model $E_k$ on the validation data $\mathcal{V}$ and then averaging the resulting $K$ split estimators.
Another legitimate estimator is the aggregated debiased LASSO estimator that can be viewed as the proposal in \cite{cai2021individual} when customized to our problem setup. 
To be specific, the aggregated estimator is formed by averaging the $K+1$ debiased LASSO estimators from the existing studies $\mathcal{D}_k$ and the validation dataset $\mathcal{V}$. 
Given that the debiased LASSO and the post-double selection estimates are asymptotically equivalent (see for example \citep{wang2020debiased}), we implement the debiased LASSO estimator via the  \texttt{R} package \texttt{hdm} for its computational speed. We found that in a small number of cases, the package \texttt{hdm} produced  numerically unstable results, and therefore, we report only the median bias and the median squared errors, instead of the averages that would be heavily against the debiased LASSO estimator due to a few unstable cases.  
We report both the mean and median bias and squared errors for our carved estimator and the split-and-aggregate estimator.
The comparison of our estimator with debiased LASSO should be made with respect to the corresponding medians, even though we provided mean for the carved and the split estimators for a relative comparison between these two estimators.

\begin{table}[h!]
\def~{\hphantom{0}}
\centering
\resizebox{\columnwidth}{!}{\begin{tabular}{cccccccccccccc}
        & &  \multicolumn{5}{c}{Bias} &   \multicolumn{5}{c}{MSE}   \\
    $\sigma_{\varepsilon} $   & $K$ & \multicolumn{2}{c}{Carved} & \multicolumn{2}{c}{Split} & Debiased  &  \multicolumn{2}{c}{Carved} & \multicolumn{2}{c}{Split} & Debiased \\
4 & 3 & $0.107$ & $0.095$ & $0.119$ & $0.103$ & $-0.051$ &   $0.109$ & $0.042$ & $0.248$ & $0.103$ & $0.044$ \\
4 & 5 & $0.117$ & $0.122$ & $0.123$ & $0.109$ & $-0.028$ &   $0.065$ & $0.032$ & $0.157$ & $0.077$ & $0.024$   \\
4 & 10 & $0.127$ & $0.125$ & $0.124$ & $0.124$ & $-0.046$ &   $0.043$ & $0.021$ & $0.083$ & $0.043$ & $0.014$  \\
2 & 3 & $-0.003$ & $0.005$ & $0.010$ & $-0.001$ & $-0.076$ &   $0.017$ & $0.006$ & $0.042$ & $0.017$ & $0.013$   \\
2 & 5 & $0.002$ & $-0.002$ & $0.012$ & $0.009$ & $-0.061$ &   $0.010$ & $0.004$ & $0.024$ & $0.009$ & $0.008$  \\
2 & 10 & $0.004$ & $0.003$ & $0.010$ & $0.011$ & $-0.051$ &   $0.004$ & $0.002$ & $0.012$ & $0.006$ & $0.005$   \\
    \end{tabular}}
    \vspace{0.2cm}
\caption{\normalfont{Simulation results under Setting I. The two columns under the Carved and Split estimators report the mean and median bias and squared errors respectively. 
The cells in the single column under Debiased report the median bias and squared errors, due to reasons indicated in the description.}}
\label{tab:1}
\end{table}

\begin{table}[h!]
\def~{\hphantom{0}}
\centering
\resizebox{\columnwidth}{!}{\begin{tabular}{cccccccccccccc}
        & &  \multicolumn{5}{c}{Bias} &   \multicolumn{5}{c}{MSE}   \\
    $\sigma_{\varepsilon} $   & $K$ & \multicolumn{2}{c}{Carved} & \multicolumn{2}{c}{Split} & Debiased  &  \multicolumn{2}{c}{Carved} & \multicolumn{2}{c}{Split} & Debiased \\
4 & 3 & $0.066$ & $0.065$ & $0.088$ & $0.089$ & $-0.081$ &   $0.024$ & $0.011$ & $0.056$ & $0.026$ & $0.059$  \\
4 & 5 &  $0.066$ & $0.072$ & $0.069$ & $0.063$ & $-0.083$ &   $0.016$ & $0.007$ & $0.033$ & $0.013$ & $0.045$   \\
4 & 10 & $0.070$ & $0.072$ & $0.079$ & $0.079$ & $-0.079$ &   $0.011$ & $0.006$ & $0.022$ & $0.009$ & $0.016$  \\
2 & 3 & $0.016$ & $0.012$ & $0.026$ & $0.023$ & $-0.089$ &   $0.005$ & $0.002$ & $0.012$ & $0.005$ & $0.020$  \\
2 & 5 & $0.016$ & $0.018$ & $0.019$ & $0.017$ & $-0.075$ &   $0.005$ & $0.001$ & $0.007$ & $0.003$ & $0.013$  \\
2 & 10 &  $0.016$ & $0.015$ &  $0.020$ & $0.021$ & $-0.061$ &   $0.002$ & $0.001$ & $0.004$ & $0.002$ & $0.006$  \\
    \end{tabular}}
    \vspace{0.2cm}
\caption{\normalfont{Simulation results under Setting II. The two columns under the Carved and Split estimators report the mean and median bias and squared errors respectively. The cells in the single column under Debiased report the median bias and squared errors.}}
\label{tab:2}
\end{table}

\begin{table}[h!]
\def~{\hphantom{0}}
\centering
\resizebox{\columnwidth}{!}{\begin{tabular}{cccccccccccccc}
        & &  \multicolumn{5}{c}{Bias} &   \multicolumn{5}{c}{MSE}   \\
    $\sigma_{\varepsilon} $   & $K$ & \multicolumn{2}{c}{Carved} & \multicolumn{2}{c}{Split} & Debiased  &  \multicolumn{2}{c}{Carved} & \multicolumn{2}{c}{Split} & Debiased \\
4 & 3 & $0.102$ & $0.103$ & $-0.179 $ & $-0.164$ & $-0.096$ &   $0.053$ & $0.020$ & $0.084$ & $0.036$ & $0.067$ \\
4 & 5 & $0.110$ & $0.112$ & $-0.170$ & $-0.167$ & $-0.091$ &   $0.042$ & $0.015$ & $0.067$ & $0.033$ & $0.042$   \\
4 & 10 & $0.092$ & $0.100$ & $-0.163$ & $-0.162$ & $-0.092$ &   $0.023$ & $0.011$ & $0.042$ & $0.027$ & $0.026$  \\
2 & 3 & $0.028$ & $0.024$ & $-0.156$ & $-0.151$ &  $-0.079$ &   $0.008$ & $0.003$ & $0.042$ & $0.024$ & $0.018$   \\
2 & 5 & $0.024$ & $0.034$ & $-0.156$ & $-0.159$ & $-0.075$ &   $0.016$ & $0.002$ & $0.034$ & $0.025$ & $0.013$  \\
2 & 10 & $0.029$ & $0.029$ & $-0.152$ & $-0.151$ & $-0.067$ &   $0.003$ & $0.001$ & $0.038$ & $0.023$ & $0.008$  \\
    \end{tabular}}
    \vspace{0.2cm}
\caption{\normalfont{Simulation results under Setting III. The two columns under the Carved and Split estimators report the mean and median bias and squared errors respectively. The cells in the single column under Debiased report the median bias and squared errors.}}
\label{tab:3}
\end{table}

The cells in Tables \ref{tab:1}, \ref{tab:2}, and \ref{tab:3} summarize our findings in the three different settings.
In all the tables, our proposed estimator is indicated as ``Carved", while the split-and-aggregate estimator is called ``Split'' and the aggregated debiased LASSO estimator is called ``Debiased".
Based on  Table \ref{tab:1}, we observe that when the noise level is low and the treatment variable is not highly correlated with the covariates, the selected model includes all the confounding variables with a high probability, and therefore, both the carved estimator and the split-and-aggregate estimator have little bias in estimating $\alpha$. 
When the noise level is higher, the two estimators tend to have slightly more bias than the debiased LASSO estimator. 
But, the bias of the estimators is still dominated by their variance in the mean squared errors.
On the other hand, the carved estimator has smaller mean-squared error than the split estimator in all the considered cases, which is expected since the carved estimator  integrates information from the existing studies with higher efficiency. 
We note that the aggregated debiased LASSO estimator loses much efficiency relative to the aggregated carved estimator when the treatment variable is highly correlated with some of the covariates, which is shown in Tables \ref{tab:2} and \ref{tab:3}.
The empirical comparisons here are in line with what we expect from our theoretical investigation.  

\section{Synthetic study on clinical trial data}
\label{sec:real}

We demonstrate the efficacy of our aggregation and estimation scheme on synthetic data generated from the Afya II clinical trial. 
Note that our paper proposes a new design to collect data for conducting validation studies and pooling data from already existing studies to estimate treatment effects.
Therefore, collecting data that aligns with our recommendation for new study designs and verifying the effectiveness of our proposal is 
beyond the scope of this paper.
Instead, we evaluate the efficiency gains of our estimator through a ``realistic" simulation study that very closely resembles the Afya II clinical trial data.
In our synthetic case study, we assume the following scenario. 
Two previous clinical trial studies have already been conducted to assess the effectiveness of an intervention, and we seek to design a new trial to confirm the results of the earlier clinical trials.

The main goal of the Ayfa II clinical trial was to test the effectiveness of a cash incentive on enhancing HIV patient adherence to antiretroviral (ARV) therapy and on achieving HIV viral suppression. The actual trial enrolled a total of $534$ HIV-positive patients, who were randomly assigned to one of three arms: a control arm, a treatment arm with a cash incentive amount of approximately $5$ US dollars, and a treatment arm with a cash incentive amount of approximately $11$ US dollars. 
The primary outcome was determined by measuring the HIV viral load in patient plasma specimens 12 months after the treatment assignment. During this study, the trial collected patient status information, including age, sex, marital status (single, married/cohabitating, divorced/separated, widowed, or unknown), weight (measured in kilograms), WHO clinical stage (ranging from 1 to 4), pregnancy status, nutrition status (normal, obesity, or malnourished), family planning ID (with a total of 5 possible statuses), ARV status (start ARV, continue, change, or stop), TB screening, TB status, and initial CD4 count.
Sociodemographic survey data was collected at the end of the study. Overall, the trial data measured $92$ covariates for the $534$ enrolled patients.

In our case study, we focus on the treatment arm where patients received a cash incentive of $11$ US dollars.
This reduces the sample size to $360$ patients. 
First, we use the LASSO to regress the outcome (viral load) against the treatment indicator variable and the collected covariates on the trial data.
The LASSO selects a total of $12$ covariates.
We record the estimated sparse coefficients as $\hat{\beta}{\texttt{Lasso}}$.
Using covariate measurements from the actual trial,
we generate the viral load outcome from a linear model 
$$Y = 10\cdot D + X \hat{\beta}_{\texttt{Lasso}} + \varepsilon,$$
where $\varepsilon \in \mathbb{R}^{360}$ is a multivariate Gaussian random variable with mean zero and identity covariance matrix.
Finally, we randomly divide the synthetic dataset into three folds, treating two of them as data collected from existing studies and the remaining one as data from our validation study. 
We repeat this entire process $500$ times by generating data from the above linear model.
This gives us $500$ synthetic datasets that are closely modeled along the actual Afya II clinical trial.
During estimation, we use only the selected covariates from any of the two existing studies as part of our validation data.

In line with the preceding section, Table \ref{tab:4} reports the comparison between our estimator, and the split-and-aggregate estimator and the aggregated debiased LASSO estimator.
As done before, we compare their bias and mean squared errors. 
We see that the carved estimator, based on aggregating only summary statistics from the two existing studies, achieves the lowest bias and smallest mean squared error. 
More specifically, the improvement in mean squared error over the debiased LASSO estimator is roughly around $15\%$, and is a significant $60\%$ over the split-and-aggregate estimator.
This case study on clinical trial data re-emphasizes that our aggregation scheme can be used to design validation studies from a synthesis of existing studies, and that we can estimate the common treatment effects without sacrificing statistical efficiency.

\begin{table}[h]
    \centering
    \fontsize{10pt}{10pt}\selectfont
    \begin{tabular}{cccccc}
          \multicolumn{3}{c}{Bias} &  \multicolumn{3}{c}{MSE}  \\
          Carved & Split & Debiased  & Carved & Split & Debiased \\ 
          0.001 & -0.011 & -0.062 & 0.058 & 0.154 & 0.068
    \end{tabular}
     \caption{\centering{Results on $500$ synthetic datasets modeled along the Afya II clinical trial.}}
    \label{tab:4}
\end{table}




\section{Concluding Remarks}
\label{Sec:conclusion}

In this paper, we develop a new scheme for estimating common treatment effects from a synthesis of prior studies.
Our scheme applies ideas from data carving to: (i) use already existing data for an unbiased estimation of treatment effects in selected models, and (ii) aggregate summary statistics from existing studies when the LASSO is used in each individual study to select potential confounding variables.
As a result, we have laid out a data aggregation and validation analysis protocol that gives us efficient unbiased estimators while preserving the privacy of individual records. 
The summary statistics required by our estimator include the first two moments in each selected model along with some compressed information from the model selection.
Our estimator, unlike the debiased LASSO estimator, does not require us to store, report, or estimate the inverse of a high-dimensional matrix. 
As shown by our simulation studies, the introduced estimator can not just mitigate bias from model selection, but also yield higher statistical efficiency than popular alternatives including off-the-shelf alternatives such as splitting.
Interval estimation, which requires us to characterize the distribution of the aggregated estimator, is a challenging question and a natural next goal.
We hope to address the problem of interval estimation with our new estimator in future work.


\begin{funding}
The first author is supported by NSF Grants DMS-1951980 and DMS-2113342. The second author is supported in part by NSF Grant DMS-2015325 and NIH Grant R01MH125746. The third author is supported in part by NSF Grants DMS-1914496 and DMS-1951980. 
\end{funding}

\bibliographystyle{imsart-number} 

\bibliography{reference}

\appendix

\section{Proofs}
\label{Sec:proofs}

\subsection{Carved Estimator: first example}

Suppose that $\omega_1$ and $\widehat{\gamma}_1$ are Gaussian variables with the distribution stated in \ref{prop:1} and \ref{prop:2}, i.e., the limiting distribution stated for these variables in \ref{prop:1} and \ref{prop:2} holds exactly.
We note the following.
\begin{proposition}
	\label{exact:UMVU:simple}
	Let $\widehat\alpha_1^{\text{\;carve}}$ assume the value
	\begin{equation*}
	\begin{aligned}
	& \widehat{\alpha}_1 + \Bigg\{ 1_{\mathbb{R}^+}(s_{E_1})\cdot  (1-\rho^2)^{-1/2}\dfrac{\rho}{\sqrt{N_1}} \left(\bar{\Phi}\left(\dfrac{(\lambda-(1-\rho^2) \sqrt{N_1}\widehat{\beta}_1)}{\sqrt{(1-\rho^2)}}\right)\right)^{-1}{\phi\left(\dfrac{(\lambda-(1-\rho^2) \sqrt{N_1}\widehat{\beta}_1)}{\sqrt{(1-\rho^2)}}\right)} \\
	& - 1_{\mathbb{R}^-}(s_{E_1}) \cdot  (1-\rho^2)^{-1/2}\dfrac{\rho}{\sqrt{N_1}} \left(\Phi\left(\dfrac{(-\lambda-(1-\rho^2) \sqrt{N_1}\widehat{\beta}_1)}{\sqrt{(1-\rho^2)}}\right)\right)^{-1}{\phi\left(\dfrac{(\lambda+(1-\rho^2) \sqrt{N_1}\widehat{\beta}_1)}{\sqrt{(1-\rho^2)}}\right)}\Bigg\}. 
	\end{aligned}
	\end{equation*}	  
	Then, the following holds
	$$\mathbb{E}\left[\widehat{\alpha}_1^{\text{\; carve}}\right] = \alpha.$$
\end{proposition}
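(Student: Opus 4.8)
The plan is to confirm the stated closed form for $\widehat{\alpha}_1^{\text{\;carve}}$ and then prove its unbiasedness, exploiting the exact joint Gaussianity and independence recorded in properties \ref{prop:1} and \ref{prop:2}. Throughout I write $W=\sqrt{N_1}(\widehat{\gamma}_1-\gamma_1)\sim N_2(0,(\mathbb{E}[\widehat{\Sigma}_1])^{-1})$ with components $W_1,W_2$, and set $T=\omega_{1,2}-\rho\omega_{1,1}$, so that the selection event \eqref{cond:event:1} reads $A=\{s_{E_1}T+s_{E_1}(1-\rho^2)\sqrt{N_1}\widehat{\beta}_1-\lambda>0\}$; crucially $A$ is measurable with respect to $(T,\widehat{\beta}_1)$ only, and since $r_1=\tfrac12$ we have $(1-r_1)/r_1=1$, so $\omega_1\sim N_2(0,\mathbb{E}[\widehat{\Sigma}_1])$ exactly.

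First I would derive the closed form by conditioning on $\widehat{\gamma}_1$, which freezes $\widehat{\beta}_1$ and turns $A$ into a one-sided truncation of $T$. A direct covariance computation gives $\mathrm{Cov}(\omega_{1,1},T)=0$ together with the regression identity $\omega_{1,1}-\rho\omega_{1,2}=-\rho T+(1-\rho^2)\omega_{1,1}$; since $\omega_{1,1}$ is independent of $(T,\widehat{\gamma}_1)$ it drops out of the conditional mean, leaving $\mathbb{E}[\omega_{1,1}-\rho\omega_{1,2}\mid\widehat{\gamma}_1,A]=-\rho\,\mathbb{E}[T\mid\widehat{\gamma}_1,A]$. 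Substituting into the definition of $\widehat{\alpha}_1^{\text{\;carve}}=\mathbb{E}[\widehat{\alpha}_1^{\text{\;initial}}\mid\widehat{\gamma}_1,A]$ and evaluating $\mathbb{E}[T\mid\widehat{\gamma}_1,A]$ by the standard mean of a univariate $N(0,1-\rho^2)$ truncated to a half-line (a Mills-ratio expression), treating $s_{E_1}=+1$ and $s_{E_1}=-1$ separately and using $\phi(-x)=\phi(x)$, reproduces exactly the two indicator terms displayed in the Proposition.

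Second, for unbiasedness I would prove $\mathbb{E}[\widehat{\alpha}_1^{\text{\;carve}}\mid A]=\alpha$; the tower property gives $\mathbb{E}[\widehat{\alpha}_1^{\text{\;carve}}\mid A]=\mathbb{E}[\widehat{\alpha}_1^{\text{\;initial}}\mid A]$, so it suffices to show the initial estimator is conditionally unbiased. Writing $\mathbb{E}[\widehat{\alpha}_1^{\text{\;initial}}\mid A]=\alpha+\tfrac{1}{\sqrt{N_1}}\mathbb{E}[W_1\mid A]+(1-\rho^2)^{-1}\tfrac{\rho}{\sqrt{N_1}}\mathbb{E}[T\mid A]$, the whole task reduces to the single identity $\mathbb{E}[W_1\mid A]=-(1-\rho^2)^{-1}\rho\,\mathbb{E}[T\mid A]$. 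I establish this by noting that $T$ is independent of $W$, that $A$ depends on $(T,W_2)$ only through the scalar $S=T+(1-\rho^2)W_2$, and that regressing $W_1,W_2,T$ onto $S$ yields $\mathbb{E}[W_1\mid A]=-\rho\,\mathbb{E}[W_2\mid A]$ and $\mathbb{E}[W_2\mid A]=(1-\rho^2)^{-1}\mathbb{E}[T\mid A]$; combining the two delivers the required cancellation.

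Finally, since the conditional identity $\mathbb{E}[\widehat{\alpha}_1^{\text{\;carve}}\mid\widehat{E}_1=E_1,\text{sign}(\widehat{\beta}_1)=s_{E_1}]=\alpha$ holds for each admissible selected sign, averaging over the law of that sign gives the marginal statement $\mathbb{E}[\widehat{\alpha}_1^{\text{\;carve}}]=\alpha$. The crux, and the step I expect to be most delicate, is the exact bias cancellation in the third paragraph: it hinges on recognizing that the estimation noise $W_2$ entering $\widehat{\beta}_1$ and the carving randomization $T$ influence selection only through the single linear statistic $S$, and that the two regression slopes align so the selection-induced bias of the refit $\widehat{\alpha}_1$ is exactly annihilated by the $\omega$-correction. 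I would track carefully the constants coming from $r_1=\tfrac12$, since it is precisely the resulting scaling of $\mathrm{Var}(\omega_1)=\mathbb{E}[\widehat{\Sigma}_1]$ that makes those slopes match.
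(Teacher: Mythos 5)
Your proposal is correct and follows essentially the same route as the paper: the same initial estimator $\widehat{\alpha}_1^{\;\text{initial}}$, the same Rao--Blackwellization by conditioning on $\widehat{\gamma}_1$ together with the selection event, the same truncated-Gaussian (Mills-ratio) evaluation giving the closed form, and the same tower-property conclusion. The only difference is presentational: where the paper establishes conditional unbiasedness of $\widehat{\alpha}_1^{\;\text{initial}}$ in one step by checking it is orthogonal (hence, by Gaussianity, independent) to the selection statistic $s_{E_1}(\omega_{1,2}-\rho\omega_{1,1})+s_{E_1}(1-\rho^2)\sqrt{N_1}\widehat{\beta}_1-\lambda$, you unpack that same covariance computation into regression slopes of $W_1$, $W_2$, $T$ onto $S=T+(1-\rho^2)W_2$ and exhibit the exact cancellation of the two bias terms --- an algebraically equivalent verification.
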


\begin{proof}
	Our proof proceeds in two steps.
	
	\smallskip
	\emph{Step 1}: \ \ Fix
	$$\hat\alpha_1^{\text{\;initial}}=\hat{\alpha}_1 - (1-\rho^2)^{-1} \dfrac{1}{\sqrt{N_1}}\cdot (\omega_{1,1}- \rho \omega_{1,2}).$$
	Note that
	$$
	\mathbb{E}\left[ \hat\alpha_1^{\text{\; initial}}\; \Big\lvert \; \widehat{E}_1 =  E_1, \;\text{sign}(\hatB_1) = s_{E_1}\right] = \alpha.
	$$
	This observation follows by writing our conditioning event as
	\begin{equation}
	\label{cond:event:simple}
	\left\{ \widehat{E}_1 = E_1,  \text{sign}(\hatB_1) =s_{E_1} \right\}  =  \left\{s_{E_1} (\omega_{1,2}-\rho\omega_{1,1})+s_{E_1} (1-\rho^2) \sqrt{N_1}\widehat{\beta}_1 -\lambda >0\right\}. 
	\end{equation}
	Observe that the initial estimate is independent of
	$$s_{E_1} (\omega_{1,2}-\rho\omega_{1,1})+s_{E_1} (1-\rho^2) \sqrt{N_1}\widehat{\beta}_1 -\lambda.$$
	Thus, we note that $\hat\alpha_1^{\text{\; initial}}$ is independent of our conditioning event, which 
	gives us 
 $$\mathbb{E}\Big[ \hat{\alpha}_1^{\text{initial}} \ \Big\lvert \widehat{E}_1 = E_1,  \text{sign}(\hatB_1) =s_{E_1}\Big]= \mathbb{E}\Big[ \hat{\alpha}_1^{\text{initial}}\Big].$$
 Because 
 $$\mathbb{E}\Big[ \hat{\alpha}_1^{\text{initial}}\Big]= \alpha$$ for fixed set $E_1=\{1\}$, we have an initial unbiased estimator.
	
	\emph{Step 2}: \ \ Conditioning further upon $\widehat{\gamma}_1$ provides us the estimator
	\begin{equation*}
	\begin{aligned}
	&\hat\alpha_1^{\text{\;carve}}
	= \mathbb{E}\Big[\widehat{\alpha}_1 -(1-\rho^2)^{-1}\dfrac{1}{\sqrt{N_1}}\cdot(\omega_{1,1}- \rho \omega_{1,2})\ \Big\lvert \widehat{\gamma}_1, s_{E_1} (\omega_{1,2}-\rho\omega_{1,1})\\
	&\;\;\;\;\;\;\;\;\;\;\;\;\;\;\;\;\;\;\;\;\;\;\;\;\;\;\;\;\;\;\;\;\;\;\;\;\;\;\;\;\;\;\;\;\;\;\;\;\;\;\;\;\;\;\;\;\;\;\;\;\;\;\;\;\;\;\;\;\;\;\;\;\;\;\;\;\;\;\; \;\;\;\;\;\; +s_{E_1} (1-\rho^2) \sqrt{N_1}\widehat{\beta}_1 -\lambda>0\Big]\\
	&= \widehat{\alpha}_1 - (1-\rho^2)^{-1} \dfrac{1}{\sqrt{N_1}}\cdot\mathbb{E}\Big[\omega_{1,1}- \rho \omega_{1,2} \  \lvert \  \widehat{\gamma}_1,\; s_{E_1}(\omega_{1,2}-\rho\omega_{1,1})\\
	&\;\;\;\;\;\;\;\;\;\;\;\;\;\;\;\;\;\;\;\;\;\;\;\;\;\;\;\;\;\;\;\;\;\;\;\;\;\;\;\;\;\;\;\;\;\;\;\;\;\;\;\;\;\;\;\;\;\;\;\;\;\;\;\;\;\;\;\;\;\;\;\;\;\;\;\;+s_{E_1}(1-\rho^2) \sqrt{N_1}\widehat{\beta}_1 -\lambda>0\Big],
	\end{aligned}
	\end{equation*}
	an improvement over $\hat\alpha_1^{\text{\; initial}}$ through Rao-Blackwellization.
	
	We compute the conditional expectation:
	$$\mathbb{E}\Big[\omega_{1,1}- \rho \omega_{1,2} \  \lvert \  \widehat{\gamma}_1,\; s_{E_1}(\omega_{1,2}-\rho\omega_{1,1})+s_{E_1}(1-\rho^2) \sqrt{N_1}\widehat{\beta}_1 -\lambda>0\Big],$$
	to obtain a final expression for our estimator.
	Note that we have, by construct, 
	$$\mathbb{E}\left[\widehat{\alpha}_1^{\text{\; carve}}\;\Big\lvert \; \widehat{E}_1 =  E_1, \;\text{sign}(\hatB_1) = s_{E_1}\right]=\alpha.$$
	Our claim in the Proposition now follows from the tower property of expectation.
\end{proof}

\subsection{Carved Estimator: general problem}

We give the proofs of the results in Section \ref{Sec:theory}.
\begin{proof}[Proof of Lemma \ref{limiting:Gaussianity}]
First, let
$$\widehat{\Sigma}_{-k,k} = \frac{1}{N_k}\left\{X_{E_k^c}' \begin{pmatrix} D & X_{E_k}  \end{pmatrix} + X_{k,E_k^c}' \begin{pmatrix} D_k & X_{k, E_k}  \end{pmatrix}\right\}$$
and let $\Sigma_{-k,k}=\mathbb{E}[\widehat\Sigma_{-k,k}]$.
Also, let $\Sigma_k =\mathbb{E}[\widehat{\Sigma}_k]$ where
$$\hat{\Sigma}_{k} = \frac{1}{N_k}\left(n_k \hat{\Xi}_{k}  + \begin{pmatrix} D & X_{E_k} \end{pmatrix}' \begin{pmatrix} D & X_{E_k} \end{pmatrix}\right).$$
Then, define
	\begin{equation*}
	\begin{aligned}
\bar{T}_{N_k}	&= \frac{1}{N_k} \begin{pmatrix} \Sigma^{-1}_k \begin{pmatrix} D & X_{E_k} \end{pmatrix}'(y-D \alpha - X_{E_k}\beta_{E_k}) \\ (X'_{E_k^c}-\Sigma_{-k,k}\Sigma^{-1}_{k}\begin{pmatrix} D & X_{E_k} \end{pmatrix}')(y-D \alpha - X_{E_k}\beta_{E_k})\\ -\begin{pmatrix} D & X \end{pmatrix}'(y-D \alpha - X_{E_k}\beta_{E_k}) \end{pmatrix}\\
	&\;\;\;\;\;+ \frac{1}{N_k}\begin{pmatrix} \Sigma^{-1}_k \begin{pmatrix} D_k & X_{k,E_k} \end{pmatrix}'(y_k-D_k \alpha - X_{k,E_k}\beta_{E_k}) \\ (X'_{k,E_k^c}-\Sigma_{-k,k}\Sigma^{-1}_{k}\begin{pmatrix} D_k & X_{k,E_k} \end{pmatrix}')(y_k-D_k \alpha - X_{k,E_k}\beta_{E_k})\\ \frac{1}{r_k} \begin{pmatrix} D_k & X_{k} \end{pmatrix}'(y_k-D_k \alpha - X_{E_k}\beta_{E_k}) -\begin{pmatrix} D_k & X_{k} \end{pmatrix}' (y_k-D_k \alpha - X_{k,E_k}\beta_{E_k})\end{pmatrix}.
	\end{aligned}
	\end{equation*}
The proof of the Lemma closely follows Proposition 4.1 in \cite{panigrahi2016integrative}, and hence we omit further details here.
\end{proof}

We state a useful Lemma before providing a proof for Proposition \ref{exact:UMVU}.
Recall that
$$\widehat{\Sigma}_{-k,k} = \frac{1}{N_k}\left\{X_{E_k^c}' \begin{pmatrix} D & X_{E_k}  \end{pmatrix} + X_{k,E_k^c}' \begin{pmatrix} D_k & X_{k, E_k}  \end{pmatrix}\right\},$$
and $\Sigma_{-k,k}=\mathbb{E}[\widehat\Sigma_{-k,k}]$.
Suppose that we rewrite \eqref{eq:lasso} as
\begin{equation}
\label{lasso:randomized:rep:gen}
\underset{\alpha, \beta}{\text{minimize}} \;\; \dfrac{1}{2\sqrt{N_k}}\|Y -D \alpha - X\beta \|^2  + \dfrac{1}{2\sqrt{N_k}}\|Y_k - D_k \alpha-  X_k\beta \|^2- \w_{k,1}^T \alpha -\w_{k,2}^T \beta + \|\Lambda \beta\|_1
\end{equation}
by using the randomization variable in \eqref{randomization:gen}.
Fixing some more symbols, we let
$$\zeta_k = \begin{pmatrix} 0'_{s} & (\Lambda_{E_k} s_{E_k})' \end{pmatrix}',$$
and denote by $Z_k$ the inactive components of subgradient variables from the $\ell_1$ penalty at  the LASSO solution.
Note that the K.K.T. mapping for \eqref{lasso:randomized:rep:gen} is equal to
\begin{equation}
\label{stationary:map}
\w_{k}=\begin{bmatrix}-\widehat{\Sigma}_{k} & 0\\ -\widehat{\Sigma}_{-k,k} & -I \end{bmatrix}\sqrt{N_k}\begin{pmatrix} \hat{\gamma}_k \\  \hat{\Gamma}_{k} \end{pmatrix} + \begin{bmatrix} \widehat{\Sigma}_{k}& 0 \\ \widehat{\Sigma}_{-k,k} & \Lambda_{E_k^c}\end{bmatrix}  \begin{pmatrix} \sqrt{N_k} \hatG_k \\   Z_k \end{pmatrix} + \begin{pmatrix} \zeta_k \\ 0 \end{pmatrix}, 
\end{equation}
where the variables $\hatG_k$ and $Z_k$ satisfy
$$\|Z_k \|\leq 1, \ \text{sign}(\hatB_k) = s_{E_k}.$$

The next result derives an expression for the marginal asymptotic likelihood based on the limiting distribution of $\widehat{\gamma}_k$ and $\hatG_k$, before we condition on the event of selection.

\begin{lemma}
Marginally, the variables $\widehat{\gamma}_k$, $\hatG_k$ are asymptotically independent of 
$\widehat{\Gamma}_k$ and $Z_k$. 
The marginal asymptotic likelihood, derived from the limiting distribution of $\widehat{\gamma}_k$ and $\hatG_k$, is proportional to
\begin{equation*}
	\begin{aligned}
 & \exp\Bigg(-\frac{N_k}{2}(\widehat{\gamma}_k- \gamma_k)' \Sigma_k(\widehat{\gamma}_k- \gamma_k)\Bigg)\cdot \exp\Bigg(-(1-r_k)^{-1}r_k\frac{N_k}{2}\left(\hatG_k- \hat\gamma_k + \frac{1}{\sqrt{N_k}}\Sigma^{-1}_{k} \zeta_k\right)' \\
 &\;\;\;\;\;\;\;\;\;\;\;\;\;\;\;\;\;\;\;\;\;\;\;\;\;\;\;\;\;\;\;\;\;\;\;\;\;\;\;\;\;\;\;\;\;\;\;\;\;\;\;\;\;\;\;\;\;\;\;\;\;\;\;\;\;\;\;\;\;\;\;\;\;\;\;\;\;\;\;\;\;\;\;\;\;\;\;\;\;\;\;\;\;\;\;\;\;\;\;\;\;\;\;\;\; \Sigma_k\left(\widehat{\gamma}_k- \hat\gamma_k + \frac{1}{\sqrt{N_k}}\Sigma^{-1}_{k} \zeta_k\right)\Bigg).
   \end{aligned}
\end{equation*}
\label{lemma:limit:properties}
\end{lemma}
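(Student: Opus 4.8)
The plan is to build directly on Lemma \ref{limiting:Gaussianity}, which already supplies the joint asymptotic Gaussianity and mutual independence of the three blocks $\widehat{\gamma}_k$, $\widehat{\Gamma}_k$ and $\omega_k$, together with the covariance $(1-r_k)r_k^{-1}\mathbb{E}[\mathfrak{G}_k]$ of $\omega_k$; the only remaining task is to transfer this structure through the LASSO optimality conditions to the pair $(\hatG_k, Z_k)$, following the polyhedral reparametrization of \cite[Proposition 4.1]{panigrahi2016integrative}. First I would write the stationarity conditions of the randomized program obtained from \eqref{eq:lasso} through the randomization \eqref{randomization:gen} (generalizing the equivalence \eqref{lasso:randomized:rep}), evaluate them at $\hatG_k$ on the event \eqref{sel:event:lasso}, and split them into the block indexed by the active coordinates $\mathcal{A}$ (the treatment together with $E_k$) and the block indexed by $E_k^c$. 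Writing $\omega_k^{\mathcal{A}}$ and $\omega_k^{\mathcal{A}^c}$ for the corresponding sub-vectors of $\omega_k$ and $\zeta_k = (0', (\Lambda_{E_k}s_{E_k})')'$, the active block reads $\omega_k^{\mathcal{A}} = \sqrt{N_k}\widehat{\Sigma}_k(\hatG_k - \widehat{\gamma}_k) + \zeta_k$, so that $\hatG_k = \widehat{\gamma}_k + N_k^{-1/2}\widehat{\Sigma}_k^{-1}(\omega_k^{\mathcal{A}} - \zeta_k)$ becomes an affine function of $\widehat{\gamma}_k$ and $\omega_k^{\mathcal{A}}$ alone.

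The inactive block expresses the subgradient as $\Lambda_{E_k^c}Z_k = \sqrt{N_k}\widehat{\Gamma}_k - \sqrt{N_k}\widehat{\Sigma}_{-k,k}(\hatG_k - \widehat{\gamma}_k) + \omega_k^{\mathcal{A}^c}$, where $\widehat{\Sigma}_{-k,k}$ is the cross block introduced in the proof of Lemma \ref{limiting:Gaussianity}. The crucial step, and the one I expect to be the main obstacle, is to eliminate the apparent dependence of $Z_k$ on $\omega_k^{\mathcal{A}}$: substituting the active relation produces a coefficient $-\widehat{\Sigma}_{-k,k}\widehat{\Sigma}_k^{-1}$ on $\omega_k^{\mathcal{A}}$, while decomposing $\omega_k^{\mathcal{A}^c} = \Sigma_{-k,k}\Sigma_k^{-1}\omega_k^{\mathcal{A}} + \omega_k^{\perp}$ into its regression on $\omega_k^{\mathcal{A}}$ (the regression coefficient is read off from the active and cross blocks of $\mathbb{E}[\mathfrak{G}_k]$, giving $\Sigma_{-k,k}\Sigma_k^{-1}$) plus an orthogonal Gaussian remainder $\omega_k^{\perp}$ contributes the coefficient $+\Sigma_{-k,k}\Sigma_k^{-1}$. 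Since $\widehat{\Sigma}_k \to \Sigma_k$ and $\widehat{\Sigma}_{-k,k} \to \Sigma_{-k,k}$, the two coefficients cancel in the limit, leaving $\Lambda_{E_k^c}Z_k = \sqrt{N_k}\widehat{\Gamma}_k + \Sigma_{-k,k}\Sigma_k^{-1}\zeta_k + \omega_k^{\perp}$, a function of $\widehat{\Gamma}_k$ and $\omega_k^{\perp}$ only. This exact cancellation is precisely where the orthogonalized form of $\widehat{\Gamma}_k$ built into Lemma \ref{limiting:Gaussianity} is essential; verifying it requires care with the signs in both KKT blocks and with the identification of the regression coefficient.

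With these two representations in hand, the independence claim follows from joint Gaussianity: $(\widehat{\gamma}_k, \hatG_k)$ is measurable with respect to $(\widehat{\gamma}_k, \omega_k^{\mathcal{A}})$ and $(\widehat{\Gamma}_k, Z_k)$ with respect to $(\widehat{\Gamma}_k, \omega_k^{\perp})$, and the cross-covariance of these two generating vectors vanishes, because $\widehat{\gamma}_k$, $\widehat{\Gamma}_k$, $\omega_k$ are mutually independent by Lemma \ref{limiting:Gaussianity} and $\omega_k^{\mathcal{A}} \perp \omega_k^{\perp}$ by construction; for jointly Gaussian vectors zero cross-covariance yields full independence. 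Finally, to recover the stated factorization I would split the joint law of $(\widehat{\gamma}_k, \hatG_k)$ into the marginal of $\widehat{\gamma}_k$ times the conditional of $\hatG_k$ given $\widehat{\gamma}_k$: the marginal is the sampling law of the refitted least-squares estimator, $\sqrt{N_k}(\widehat{\gamma}_k - \gamma_k) \Rightarrow N(0, \Sigma_k^{-1})$, the only factor depending on $\gamma_k$, while the conditional, obtained from $\hatG_k - \widehat{\gamma}_k + N_k^{-1/2}\Sigma_k^{-1}\zeta_k = N_k^{-1/2}\Sigma_k^{-1}\omega_k^{\mathcal{A}}$ with $\mathrm{Cov}(\omega_k^{\mathcal{A}}) = (1-r_k)r_k^{-1}\Sigma_k$, is Gaussian centered at $\widehat{\gamma}_k - N_k^{-1/2}\Sigma_k^{-1}\zeta_k$ and free of $\gamma_k$, supplying the second factor. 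Throughout, the $o_p(1)$ remainder $R_{N_k}$ of Lemma \ref{limiting:Gaussianity} and the replacement of $\widehat{\Sigma}_k$ by $\Sigma_k$ are handled exactly as in \cite[Proposition 4.1]{panigrahi2016integrative}, so that all assertions hold in the limiting Gaussian law.
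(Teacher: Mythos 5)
Your proof is correct and follows essentially the same route as the paper's: writing the K.K.T. map of the randomized program, solving for $\hatG_k$ and $Z_k$ as affine functions of $(\widehat{\gamma}_k, \widehat{\Gamma}_k, \omega_k)$ with $\widehat{\Sigma}_k, \widehat{\Sigma}_{-k,k}$ replaced by their limiting values, and then combining the independence structure of Lemma \ref{limiting:Gaussianity} with the Gaussian regression decomposition of $\omega_k$ --- your explicit cancellation of the $\Sigma_{-k,k}\Sigma_k^{-1}$ coefficients is precisely the step the paper compresses into ``we easily verify the claim.'' One remark: the factorization you derive (marginal precision $N_k\Sigma_k$ for $\widehat{\gamma}_k$, conditional precision $(1-r_k)^{-1}r_k N_k \Sigma_k$ for $\hatG_k$ given $\widehat{\gamma}_k$) is the internally consistent version that the paper actually uses downstream in Theorem \ref{exact:UMVU}, whereas the lemma as printed carries typographical slips (the $(1-r_k)^{-1}r_k$ factor attached to the wrong exponent, $\Sigma_k^{-1}$ in place of $\Sigma_k$, $n$ for $N_k$, and $\widehat{\gamma}_k$ where $\hatG_k$ is meant in the second quadratic form), so the apparent mismatch between your result and the stated display is not an error on your part.
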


\begin{proof}
Using \eqref{stationary:map} together with the fact that $\widehat\Sigma_{k}- \Sigma_k= o_p(1)$, $\widehat\Sigma_{-k,k}- \Sigma_{-k,k}=o_p(1)$ in our i.i.d. framework, we deduce that the variables $\hatG_k$ and $Z_k$ in this map admit the asymptotic representation:
\begin{equation}
\label{aymp:rep}
\begin{aligned}
\sqrt{N}_k\hatG_k &=\Sigma^{-1}_{k}\w_{k, E} + \sqrt{N_k}\hat{\gamma}_k -\Sigma^{-1}_{k}\zeta_k + o_p(1);\\
Z_k  &=(\Lambda_{E_k^c})^{-1}(\w_{k, E^c} -\Sigma_{-k,k}\Sigma^{-1}_{k}\w_{k, E}  + \Sigma_{-k,k}\Sigma^{-1}_{k}\zeta_k  +\sqrt{N}_k \hat\Gamma_k) + o_p(1).
\end{aligned}
\end{equation}
We easily verify our claim by using the limiting marginal distribution of $\sqrt{N_k}(\widehat{\gamma}_k- \gamma_k)$, $\sqrt{N_k}\widehat{\Gamma}_k$, and $\w_k$, that was  derived in Lemma \ref{limiting:Gaussianity}.
\end{proof}

We are ready to provide the proof of Proposition \ref{exact:UMVU}.
The limiting distribution in Lemma \ref{limiting:Gaussianity} yields us an asymptotic debiasing term for the least squares estimator in the selected model after we condition on the selection event. 

\begin{proof}\emph{Proposition \ref{exact:UMVU}}.
	We consider the following initial estimator 
	$$
	\hat\alpha_k^{\text{\; initial}} = (1-r_k)^{-1} \hat\alpha_k - (1-r_k)^{-1} r_k e_1'\left( \frac{1}{\sqrt{N_k}}\Sigma^{-1}_{k} \zeta_k +\hatG_{k} \right).
	$$
	Observe, we have
	$$\mathbb{E}\left[ \hat\alpha_k^{\text{\; initial}} \; \Big\lvert \; \widehat{E}_k =  E_k, \;\text{sign}(\hatB_k) = s_{E_k}\right] = \alpha.$$
	This is because:
	\begin{equation*}
	\begin{aligned}
	& \mathbb{E}\left[ \hat\alpha_k^{\text{\; initial}} \; \Big\lvert \; \widehat{E}_k =  E_k, \;\text{sign}(\hatB_k) = s_{E_k}\right]\\
	&=  \mathbb{E}\left[  (1-r_k)^{-1} \hat\alpha_k - (1-r_k)^{-1} r_k e_1'\left\{ \frac{1}{\sqrt{N_k}}\Sigma^{-1}_{k} \zeta_k  +\hatG_{k}\right\}\right]\\
	&= (1-r_k)^{-1}  \mathbb{E}\left[ \hat\alpha_k\right] - (1-r_k)^{-1} r_k e_1'\left\{\frac{1}{\sqrt{N_k}}\Sigma^{-1}_{k} \zeta_k  + \mathbb{E}\left[ \mathbb{E}\left[\hatG_{k}\;\lvert \; \hat\gamma_k\right]\right]\right\}\\
	&= (1-r_k)^{-1}  \mathbb{E}\left[ \hat\alpha_k\right] - (1-r_k)^{-1} r_k e_1' \left\{\frac{1}{\sqrt{N_k}}\Sigma^{-1}_{k} \zeta_k  + \mathbb{E}\left[  \hat\gamma_k - \frac{1}{\sqrt{N_k}}\Sigma^{-1}_{k} \zeta_k \right]\right\}\\
	&=(1-r_k)^{-1} \mathbb{E}\left[ \hat\alpha_k\right]  - (1-r_k)^{-1} r_k e_1'\mathbb{E}\left[\hat\gamma_k\right] = \alpha.
	\end{aligned}
	\end{equation*}
	Note, the first display uses the fact that our selection event can be rewritten as 
 $$\left\{\|Z_k \|_{\infty}\leq 1, \ \text{sign}(\hatB_k)= s_{E_k}\right\} .$$
  That is,
	the conditional expectation of the initial estimator is equal to
	$$
	\mathbb{E}\left[ (1-r_k)^{-1} \hat\alpha_k - (1-r_k)^{-1} r_k e_1'\left\{ \frac{1}{\sqrt{N_k}}\Sigma^{-1}_{k} \zeta_k  +\hatG_{k} \right\}
	\; \Big\lvert \;  \|Z_k \|_{\infty}\leq 1, \ \text{sign}(\hatB_k)= s_{E_k}\right].
	$$
	The above expression is further equal to
	$$
	\mathbb{E}\left[ (1-r_k)^{-1} \hat\alpha_k - (1-r_k)^{-1} r_k e_1'\left\{ \frac{1}{\sqrt{N_k}}\Sigma^{-1}_{k} \zeta_k  +\hatG_{k} \right\}\right],
	$$
	due to the independence between $Z_k$ and $\hat\alpha_k^{\text{\; initial}}$ and between $\hatG_{k}$ and $\hat\alpha_k^{\text{\; initial}}$ in the asymptotic limit, which follows directly from Lemma \ref{lemma:limit:properties}.
	The second display uses the tower property of expectation.
	Using Lemma \ref{lemma:limit:properties} once again, we note that
	$$\mathbb{E}\left[\hatG_{k}\;\Big\lvert \; \hat\gamma_k\right]=  \hat\gamma_k - \frac{1}{\sqrt{N_k}}\Sigma^{-1}_{k} \zeta_k,$$
	in the penultimate display.
	Conditioning further upon $\hat\gamma_k$, the complete sufficient statistic, our estimator in the claim is equal to
	\begin{equation}
	\label{UB:final}
	\widehat{\alpha}_k^{\text{\; carve}} = \mathbb{E}\left[  \hat\alpha_k^{\text{\; initial}}\; \Big\lvert \; \hat\gamma_k,\; \widehat{E}_k =  E_k, \;\text{sign}(\hatB_k) = s_{E_k}\right].
	\end{equation}
	Obtaining an expression for the right-hand side estimate in \eqref{UB:final}, we have
	\begin{equation*}
	\begin{aligned}
	& \mathbb{E}\left[ \hat\alpha_k^{\text{\; initial}} \; \Big\lvert \; \hat\gamma_k,\; \widehat{E}_k =  E_k, \;\text{sign}(\hatB_k) = s_{E_k} \right]\\
	&=  (1-r_k)^{-1}\cdot \hat\alpha_k   - (1-r_k)^{-1} r_k e_1'\frac{1}{\sqrt{N_k}}\Sigma^{-1}_{k} \zeta_k\\
	&\;\;\;\;\;\;\;\;\;\;\;\;\;\;\;\;\;\;\;\;\;\;\;\;\;\;\;\;\;\;\;\;\;\;\;\;\;\;\;\;- (1-r_k)^{-1} r_k\frac{1}{\sqrt{N_k}}\cdot  e_1'\mathbb{E}\left[  \sqrt{N_k}\hatG_{k}\; \Big\lvert \; \hat\gamma_k,\;\text{sign}(\hatB_k) = s_{E_k}  \right].
	\end{aligned}
	\end{equation*}
	Because 
	$$\mathbb{E}\left[  \sqrt{N_k}\hatG_{k}\; \Big\lvert \; \hat\gamma_k,\;\text{sign}(\hatB_k) = s_{E_k}  \right]=\mu_{\mathcal{H}}\left(\sqrt{N_k}\hat\gamma_k -\Sigma^{-1}_{k} \zeta_k, r_k^{-1}(1-r_k)\Sigma^{-1}_k\right),$$
	we obtain our estimator in the claim.
\end{proof}

\subsection{Asymptotic properties of carved estimator}
 
Before we provide a proof for the asymptotic unbiasedness of our estimator, we present a limiting value for the probability of selection.
Fixing some notations, set  $\gamma_k = \begin{pmatrix} \alpha' & \beta'_{E_k} \end{pmatrix}'$.
As per our parameterization in \eqref{seq:parameters}
$$\sqrt{N_k}\gamma_k = a_{N_k}\gamma_0.$$ 

\begin{proposition}
	\label{conv:log:partition}
	Suppose that the assumptions in Conditions \ref{moment:condition:0} and \ref{prob:sel:condition:0} are met. 
	Then, the probability of selection assumes the following limiting value:
	\begin{equation*}
	\begin{aligned}
	&\lim_{N_k\to \infty} \frac{1}{(a_{N_k})^{2}}\log \mathbb{P}\left[\widehat{E}_k =  E_k, \;\text{sign}(\hatB_k) = s_{E_k}\right]+\displaystyle\inf_{\widetilde{\gamma}, \widetilde{z}}\Big\{ \dfrac{1}{2}(\widetilde{\gamma}- \gamma_0)' \Sigma_{k} (\widetilde{\gamma}- \gamma_0) \\
	&+  \dfrac{1}{2}(1-r_k)^{-1}r_k (\widetilde{z}-\widetilde{\gamma}+  \frac{1}{a_{N_k}}\Sigma_{k}^{-1}\zeta_k)'\Sigma_{k}  (\widetilde{z}-\widetilde{\gamma} + \frac{1}{a_{N_k}} \Sigma_{k}^{-1}\zeta_k)+  \dfrac{1}{(a_{N_k})^2}B_{s_{E_k}}(a_{N_k}\widetilde{z})\Big\} -C_0=0,
	\end{aligned}
	\end{equation*}
 where $C_0$ is a constant free of $\gamma_0$.
\end{proposition}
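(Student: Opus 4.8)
The plan is to establish the claim as a Laplace-type (moderate-deviations) evaluation of the selection probability, in the same spirit as the approximation used to pass from Theorem \ref{exact:UMVU} to the optimization \eqref{optimizer}. Throughout I would rescale by $a_{N_k}$, setting $\widetilde{\gamma}=\sqrt{N_k}\hat\gamma_k/a_{N_k}$ and $\widetilde{z}=\sqrt{N_k}\hatG_k/a_{N_k}$, so that the two quadratic forms in the stated objective are exactly the rate functions of the centered variables. Indeed $a_{N_k}(\widetilde{\gamma}-\gamma_0)=\sqrt{N_k}(\hat\gamma_k-\gamma_k)$, which is asymptotically $N(0,\Sigma_k^{-1})$ (consistent with the illustrative computation and Lemma \ref{limiting:Gaussianity}), giving the quadratic rate $\tfrac12(\cdot)'\Sigma_k(\cdot)$; and $a_{N_k}(\widetilde{z}-\widetilde{\gamma}+\tfrac{1}{a_{N_k}}\Sigma_k^{-1}\zeta_k)=\Sigma_k^{-1}\omega_{k,E}+o_p(1)$ by the representation \eqref{aymp:rep}, where $\Sigma_k^{-1}\omega_{k,E}$ has covariance $(1-r_k)r_k^{-1}\Sigma_k^{-1}$ (using Lemma \ref{limiting:Gaussianity}(iv)), giving the rate $\tfrac12(1-r_k)^{-1}r_k(\cdot)'\Sigma_k(\cdot)$. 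This matching is the reason the objective in the proposition has precisely the displayed form.

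First I would reduce the event $\{\widehat{E}_k=E_k,\ \text{sign}(\hatB_k)=s_{E_k}\}$ to its polyhedral surrogate. Condition \ref{prob:sel:condition:0} states exactly that the two probabilities agree to leading order on the scale $a_{N_k}^2$, so it suffices to compute the limit for the linear event $\{A(\sqrt{N_k}\hat\gamma_k',\ \sqrt{N_k}\hat\Gamma_k',\ \omega_k')'\le b\}$. Using the K.K.T.\ representation \eqref{stationary:map} together with \eqref{aymp:rep}, I would re-express this linear event through the optimization variable $\hatG_k$ and write the selection probability as an integral of the limiting joint Gaussian density from Lemma \ref{lemma:limit:properties} over the region $\{\text{sign}(\hatB_k)=s_{E_k}\}$. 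The change of variables from $\omega_k$ to $\hatG_k$ is what produces, after a Laplace evaluation of the region integral, the logarithmic barrier $B_{s_{E_k}}(a_{N_k}\widetilde{z})$ as a smooth surrogate for the hard sign constraint, matching the barrier term in the objective.

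Next I would justify replacing the finite-sample law by this limiting Gaussian at the exponential scale. By Lemma \ref{limiting:Gaussianity} the relevant vector equals $\sqrt{N_k}\bar{T}_{N_k}+R_{N_k}$ with $\bar{T}_{N_k}$ an average of $N_k$ i.i.d.\ mean-zero summands; the exponential raw-moment assumption in Condition \ref{moment:condition:0} supplies a deviations principle for $\bar{T}_{N_k}$ (via Cram\'er / G\"artner--Ellis), and in the regime $a_{N_k}=O(\sqrt{N_k})$ the associated rate function reduces to the quadratic Gaussian form governed by the limiting covariances identified above. The superexponential-decay clause of Condition \ref{moment:condition:0}, namely $a_{N_k}^{-2}\log\mathbb{P}[a_{N_k}^{-1}\|R_{N_k}\|>\epsilon]\to-\infty$, guarantees that $R_{N_k}$ does not perturb the rate, so the exponential order of the integral is governed by $\exp(-a_{N_k}^2\inf_{\widetilde{\gamma},\widetilde{z}}J)$ with $J$ the displayed objective, while $C_0$ absorbs the subexponential, $\gamma_0$-free normalizing constant. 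Applying the Laplace method to the integral (with Condition \ref{operator:norm:bdd:0} keeping $\widehat\Sigma_k$ controlled so the quadratic forms stay uniformly nondegenerate) and dividing the logarithm by $a_{N_k}^2$ then yields the stated identity.

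The main obstacle I anticipate is the passage from the genuine (non-Gaussian) i.i.d.\ average $\bar{T}_{N_k}$ to the quadratic rate function while simultaneously absorbing the $o_p(1)$ remainder $R_{N_k}$ at the exponential scale; this is exactly the delicate point that Condition \ref{moment:condition:0} is engineered to handle, and care is needed to verify that the deviation window dictated by $a_{N_k}=O(\sqrt{N_k})$ keeps the computation in the moderate-deviation regime where the Gaussian (quadratic) rate is exact rather than the full Legendre transform. A secondary subtlety is the bookkeeping of the change of variables from $\omega_k$ to $\hatG_k$, so that the Jacobian and the geometry of the half-space $\{\text{sign}(\hatB_k)=s_{E_k}\}$ reproduce precisely the barrier $B_{s_{E_k}}(a_{N_k}\widetilde{z})$ appearing in the objective.
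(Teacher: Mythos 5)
Your proposal is correct and follows essentially the same route as the paper's proof: reduction of the selection event to the K.K.T./polyhedral surrogate under Condition \ref{prob:sel:condition:0}, a large-deviation evaluation of the Gaussian linearization from Lemma \ref{limiting:Gaussianity} and Lemma \ref{lemma:limit:properties} (with Condition \ref{moment:condition:0} supplying the exponential moment and annihilating $R_{N_k}$ at the exponential scale), the reparameterization $w_E \mapsto \widetilde{z}$ that yields exactly your two quadratic rate functions with the $\gamma_0$-free inactive-coordinate contribution absorbed into $C_0$, and the logarithmic barrier standing in for the sign constraint. The only cosmetic differences are that the paper introduces the barrier via convexity of the limiting constrained objective rather than as an output of the Laplace integral, and that your parenthetical appeal to Condition \ref{operator:norm:bdd:0} is unnecessary, since the rate function involves only the population matrix $\Sigma_k$.
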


\begin{proof}
	We have 
	$$\lim_{N_k\to \infty}\frac{1}{(a_{N_k})^{2}}\left\{\log \mathbb{P}[\widehat{E}_k =  E_k, \;\text{sign}(\hatB_k) = s_{E_k}]-\log \mathbb{P}[ \|Z_k \|_{\infty}\leq 1, \ \text{sign}(\hatB_k)= s_{E_k}]\right\}=0,$$
	based on the K.K.T. conditions for the LASSO in  \eqref{stationary:map}.
	Using the asymptotic representation for the variables $Z_k$ and $\hatG_k$ in \eqref{aymp:rep} and the condition in Condition \ref{prob:sel:condition:0}, we note that 
	$$\lim_{N_k\to \infty}\frac{1}{(a_{N_k})^{2}} \log \mathbb{P}[ \|Z_k \|_{\infty}\leq 1, \ \text{sign}(\hatB_k)= s_{E_k}]$$
	is equal to
	\begin{equation*}
	\begin{aligned}
	& \lim_{N_k\to \infty}\frac{1}{(a_{N_k})^{2}}\Big\{\log \mathbb{P}\left[ \text{sign}(e_2'(\Sigma^{-1}_{k}\w_{k, E} + \sqrt{N_k}\hat{\gamma}_k -\Sigma^{-1}_{k}\zeta_k)) = s_{E_k}\right]\\
	&\;\;\;\;\;\;\;\;\;\;\;\;\;\;\;\;\;\;\;+ \log \mathbb{P}\left[ \|(\Lambda_{E_k^c})^{-1}(\w_{k, E^c} -\Sigma_{-k,k}\Sigma^{-1}_{k}\w_{k, E}  + \Sigma_{-k,k}\Sigma^{-1}_{k}\zeta_k  +\sqrt{N}_k \hat\Gamma_k)\|_{\infty}\leq 1\right]\Big\}\\
	&= \lim_{N_k\to \infty} T_{1,n} + T_{2,n}
	\end{aligned}
	\end{equation*}
	where, for $v_1 \in \mathbb{R}^s$ and $v_2 \in \mathbb{R}^{q_k}$, $e_2' \begin{pmatrix} v_1 & v_2 \end{pmatrix} = v_2$. 
    The assumption in Condition \ref{moment:condition:0} allows to apply a  large deviation limit for the two log-probabilities $T_{1,n}$, $T_{2,n}$.
    First, we note that  
    $$\lim_{N_k\to \infty} T_{1,n}= \lim_{N_k\to \infty}\frac{1}{(a_{N_k})^{2}}\log \mathbb{P}\left[ \text{sign}(e_2'(\Sigma^{-1}_{k}\w_{k, E} + \sqrt{N_k}\hat{\gamma}_k -\Sigma^{-1}_{k}\zeta_k)) = s_{E_k}\right]$$
    agrees with
	\begin{equation}
	  -\displaystyle\inf_{\widetilde{\gamma}, w_E:  \text{sign}(e_2'(\Sigma^{-1}_{k}w_E + \widetilde{\gamma})) = s_{E_k}} \ \Big\{\dfrac{1}{2}(\widetilde{\gamma}- \gamma_0)' \Sigma_{k} (\widetilde{\gamma}- \gamma_0)  +  (1-r_k)^{-1}r_k\dfrac{1}{2}w_E' \Sigma^{-1}_{k} w_E\Big\}  \label{imp:equa}
	\end{equation}
	up to an additive constant.
	Second, we note that the limiting distribution of the variables in $T_{2,n}$ is free from $\gamma_0$, and hereafter, let $\displaystyle\lim_{N_k\to \infty} T_{2,n}=\widetilde{C}$.

	At last, we apply a reparameterization to the optimizing variables
	$w_E\to \widetilde{z}$ such that
	$\widetilde{z} =\Sigma^{-1}_{k}w_{E} + \widetilde{\gamma} $ in \eqref{imp:equa}.
	Thus, we note that the limiting value of the log-probability of selection is equal to 
	\begin{equation*}
	\begin{aligned}
	& -\displaystyle\inf_{\widetilde{\gamma}, \widetilde{z}: \text{sign}(\widetilde{z}_2)=s_{E_k}} \ \Big\{\dfrac{1}{2}(\widetilde{\gamma}- \gamma_0)' \Sigma_{k} (\widetilde{\gamma}- \gamma_0) + \dfrac{1}{2}(1-r_k)^{-1}r_k (\widetilde{z}-\widetilde{\gamma})'\Sigma_{k}  (\widetilde{z}-\widetilde{\gamma})\Big\} + C_0,
	\end{aligned}
	\end{equation*}
 where $C_0$ is a constant free of $\gamma_0$.
	Because of convexity of the above optimization objective, we conclude that
	\begin{equation*}
	\begin{aligned}
	&\lim_{N_k\to \infty} \Big\{\frac{1}{(a_{N_k})^{2}}\log \mathbb{P}\left[\widehat{E}_k =  E_k, \;\text{sign}(\hatB_k) = s_{E_k}\right]+\displaystyle\inf_{\widetilde{\gamma}, \widetilde{z}}\Big\{ \dfrac{1}{2}(\widetilde{\gamma}- \gamma_0)' \Sigma_{k} (\widetilde{\gamma}- \gamma_0) \\
	&+  \dfrac{1}{2}(1-r_k)^{-1}r_k (\widetilde{z}-\widetilde{\gamma}+  \frac{1}{a_{N_k}}\Sigma_{k}^{-1}\zeta_k)'\Sigma_{k}  (\widetilde{z}-\widetilde{\gamma} + \frac{1}{a_{N_k}} \Sigma_{k}^{-1}\zeta_k)+  \dfrac{1}{(a_{N_k})^2}B_{s_{E_k}}(a_{N_k}\widetilde{z})\Big\} -C_0=0
	\end{aligned}
	\end{equation*}
 to complete our proof.
\end{proof}

We introduce some important convex functions that we will need to prove Theorem \ref{unbiasedness}.
Let
	\begin{equation*}
	\begin{aligned}
	D_{N_k}(\widetilde\gamma)&= \dfrac{1}{2}\widetilde\gamma' \widehat{\Sigma}_k \widetilde\gamma +\displaystyle\inf_{\widetilde{z}} \Big\{\dfrac{1}{2}(1-r_k)^{-1}r_k (\widetilde{z}-\widetilde\gamma+  \frac{1}{a_{N_k}}\hat{\Sigma}_{k}^{-1}\zeta_k)'\hat{\Sigma}_{k}  (\widetilde{z}-\widetilde\gamma + \frac{1}{a_{N_k}} \hat{\Sigma}_{k}^{-1}\zeta_k) \\
	&\;\;\;\;\;\;\;\;\;\;\;\;\;\;\;\;\;\;\;\;\;\;\;\;\;\;\;\;\;\;\;\;\;\;\;+  \dfrac{1}{(a_{N_k})^2}B_{s_{E_k}}(a_{N_k}\widetilde{z})\Big\},
	\end{aligned}
	\end{equation*}
 and let 
	\begin{equation}
	\label{P:D}
	\widetilde{\mathcal{P}}_{N_k}(\eta_0) = \displaystyle\sup_{\widetilde\gamma} \widetilde\gamma' \eta_0 -D_{N_k}(\widetilde\gamma),
	\end{equation}
	which is the convex conjugate for $D_{N_k}$.
 Similarly, let 
 \begin{equation*}
	\begin{aligned}
	\breve{D}_{N_k}(\widetilde\gamma)&= \dfrac{1}{2}\widetilde\gamma' \Sigma_k \widetilde\gamma +\displaystyle\inf_{\widetilde{z}} \Big\{\dfrac{1}{2}(1-r_k)^{-1}r_k (\widetilde{z}-\widetilde\gamma+  \frac{1}{a_{N_k}}\Sigma_{k}^{-1}\zeta_k)'\Sigma_{k}  (\widetilde{z}-\widetilde\gamma + \frac{1}{a_{N_k}} \Sigma_{k}^{-1}\zeta_k) \\
	&\;\;\;\;\;\;\;\;\;\;\;\;\;\;\;\;\;\;\;\;\;\;\;\;\;\;\;\;\;\;\;\;\;\;\;+  \dfrac{1}{(a_{N_k})^2}B_{s_{E_k}}(a_{N_k}\widetilde{z})\Big\},
	\end{aligned}
	\end{equation*}
 and let $\breve{\mathcal{P}}_{N_k}$ be the corresponding convex conjugate.
 
\begin{proof}\emph{Theorem \ref{unbiasedness}.}
	First, we note that our carved estimator is equal to
	\begin{equation}
	\label{est:eqn}
	\frac{1}{a_{N_k}}\sqrt{N_k}\widehat{\alpha}_k^{\text{\; carve}} = e_1'\widehat\Sigma^{-1}_{k}  \nabla D_{N_k}\left(\frac{1}{a_{N_k}}\sqrt{N_k} \hat\gamma_k\right),
	\end{equation}
	where $\nabla D_{N_k}$ is the gradient of $D_{N_k}$.
	This is because
	\begin{equation*}
	\begin{aligned}
	\nabla D_{N_k}(\widetilde\gamma) =  \widehat{\Sigma}_k \widetilde\gamma - (1-r_k)^{-1}r_k \hat{\Sigma}_{k} \left(z^{*}(\widetilde\gamma)-\widetilde\gamma+  \frac{1}{a_{N_k}}\hat{\Sigma}_{k}^{-1}\zeta_k\right),
	\end{aligned}
	\end{equation*}
	where 
	\begin{equation*}
	\begin{aligned}
	z^*(\widetilde\gamma) &= \text{arginf}_{z}\; \Big\{\dfrac{1}{2}(1-r_k)^{-1}r_k \left(\widetilde{z}-\widetilde\gamma+  \frac{1}{a_{N_k}}\hat{\Sigma}_{k}^{-1}\zeta_k\right)'\hat{\Sigma}_{k}  \left(\widetilde{z}-\widetilde\gamma + \frac{1}{a_{N_k}} \hat{\Sigma}_{k}^{-1}\zeta_k\right) \\
	&\;\;\;\;\;\;\;\;\;\;\;\;\;\;\;\;\;\;\;\;\;\;\;\;\;\;\; +  \dfrac{1}{(a_{N_k})^2}B_{s_{E_k}}(a_{N_k}\widetilde{z})\Big\},
	\end{aligned}
	\end{equation*}
	and
	$z^*\left(\frac{1}{a_{N_k}}\sqrt{N_k} \hat\gamma_k\right) =  \frac{1}{a_{N_k}}\sqrt{N_k}\widehat{z}_k$, where $\widehat{z}_k$ is defined in \eqref{optimizer}.

	Also, define
	$$\widetilde{\mathcal{P}}_{0,N_k}(\eta_0) =\frac{1}{(a_{N_k})^{2}}\log \mathbb{P}\left[\widehat{E}_k =  E_k, \;\text{sign}(\hatB_k) = s_{E_k}\right] + \frac{1}{2}\eta'_0 \Sigma^{-1}_k\eta_0, $$ 
	where $\eta_0= \Sigma_k\gamma_0$.
 Let $\lambda_{\text{max}}$ and $\lambda_{\text{min}}$ be
	the largest and smallest eigen-value of  $\hat{\Sigma}_{k}$. 
	Now, we have
	\begin{equation*}
	\begin{aligned}
	& \mathbb{E}\left[ \Big\|\frac{1}{a_{N_k}}\sqrt{N_k}(\widehat{\alpha}_k^{\text{\; carve}} - \alpha)\Big\|^2\Big\lvert  \widehat{E}_k = E_k,  \text{sign}(\hatB_k) =s_{E_k}\right] \\
	&
 \leq\dfrac{1}{\lambda^2_{\text{min}}(\widehat\Sigma_{k})}\mathbb{E}\left[ \Big\| \nabla\widetilde{\mathcal{P}}_{N_k}^{-1}\left(\frac{1}{a_{N_k}}\sqrt{N_k} \hat\gamma_k\right)- \widehat\Sigma_{k} \gamma_0\Big\|^2\Big\lvert  \widehat{E}_k = E_k,  \text{sign}(\hatB_k) =s_{E_k}\right]
	\end{aligned}
	\end{equation*}
	\begin{equation*}
	\begin{aligned}
	&\leq   \dfrac{C^2}{\lambda^2_{\text{min}}(\widehat\Sigma_{k})}\mathbb{E}\left[ \Big\|\frac{1}{a_{N_k}}\sqrt{N_k} \hat\gamma_k-  \nabla\widetilde{\mathcal{P}}_{N_k}(\widehat\Sigma_{k} \gamma_0)\Big\|^2\Big\lvert  \widehat{E}_k = E_k,  \text{sign}(\hatB_k) =s_{E_k}\right] \\
	&\leq  \dfrac{C^2}{\lambda^2_{\text{min}}(\widehat\Sigma_{k})}\Bigg\{\mathbb{E}\left[ \Big\|\frac{1}{a_{N_k}}\sqrt{N_k}\hat\gamma_k-  \nabla\widetilde{\mathcal{P}}_{0,N_k}(\Sigma_{k} \gamma_0)\Big\|^2\Big\lvert  \widehat{E}_k = E_k,  \text{sign}(\hatB_k) =s_{E_k}\right]  \\
	&\;\;\;\;\;\;\;\;\;\;\;\;\;\;\;\;\;\;\;\;\;\;\;\;\;\;\;\;\;\;\;\; + \mathbb{E}\left[\Big\| \nabla\breve{\mathcal{P}}_{N_k}(\Sigma_{k} \gamma_0)-  \nabla\widetilde{\mathcal{P}}_{N_k}(\widehat\Sigma_{k} \gamma_0)\Big\|^2\Big\lvert  \widehat{E}_k = E_k,  \text{sign}(\hatB_k) =s_{E_k}\right] \\
	&\;\;\;\;\;\;\;\;\;\;\;\;\;\;\;\;\;\;\;\;\;\;\;\;\;\;\;\;\;\;\;\;\;+ \Big\| \nabla\breve{\mathcal{P}}_{N_k}(\Sigma_{k} \gamma_0)-  \nabla\widetilde{\mathcal{P}}_{0,N_k}(\Sigma_{k} \gamma_0)\Big\|^2\Bigg\}.
	\end{aligned}
	\end{equation*}
	The second display uses \eqref{P:D} and \eqref{est:eqn} to deduce:
	\begin{equation*}
	\begin{aligned}
	e_1'\widehat\Sigma^{-1}_{k} \nabla\widetilde{\mathcal{P}}_{N_k}^{-1}\left(\frac{1}{a_{N_k}}\sqrt{N_k} \hat\gamma_k\right) &= e_1'\widehat\Sigma^{-1}_{k} \nabla D_{N_k}\left(\frac{1}{a_{N_k}}\sqrt{N_k} \hat\gamma_k\right)= \frac{1}{a_{N_k}}\sqrt{N_k}\widehat{\alpha}_k^{\text{\; carve}}.
	\end{aligned}
	\end{equation*} 
	The third display uses the fact that $\nabla\widetilde{\mathcal{P}}_{N_k}^{-1}$ is Lipschitz with constant $C$ where $C= (1-r_k)^{-1} \lambda_{\text{max}}(\hat{\Sigma}_{k})$.
 The last display follows after using the standard triangle inequality for decomposing the preceding expectation. 
 For the decomposition shown in the last display, it is easy note that the expectation in the first term is $O(1)$.
    Observe that $\nabla\widetilde{\mathcal{P}}_{N_k}(\cdot)$, in the second display, is  Lipschitz, because it is the gradient of the conjugate of a strongly convex function.
    Using the fact that $\widehat{\Sigma}_k-\Sigma_k= o_p(1)$ in our i.i.d. framework, jointly with the condition in Assumption \ref{operator:norm:bdd:0}, we deduce that the expected value of the second term is $O(1)$. 
    The third term converges to $0$ by using the convexity of $\widetilde{\mathcal{P}}_{0,N_k}$ together with the pointwise convergence shown by Proposition \ref{conv:log:partition}. 
	This completes our proof.
\end{proof}

We turn to deriving a feasible value for the observed Fisher information matrix in order to estimate the variance of the carved estimator. To this end, the limiting value for the probability of selection in Proposition \ref{conv:log:partition} gives rise to an approximate log-partition function in terms of $\eta_0= \Sigma_k\gamma_0$, the natural parameters in the asymptotic distribution of the refitted least squares estimation after conditioning on the selection event. 

Letting $$\sqrt{N}_k \eta = a_{N_k}\eta_0,$$
observe that the approximate log-partition function based on Proposition \ref{conv:log:partition} is given by:
\begin{equation}
\label{approx:part}
\begin{aligned}
& \frac{a^2_{N_k}}{2}\eta_0'  \Sigma^{-1}_{k} \eta_0 -a_{N_k}^2\cdot\displaystyle\inf_{\widetilde{\gamma}, \widetilde{z}}\Big\{ \dfrac{1}{2}(\widetilde{\gamma}-  \Sigma^{-1}_{k} \eta_0)' \Sigma_{k} (\widetilde{\gamma}-  \Sigma^{-1}_{k} \eta_0) \\
&\;\;\;\;\; +  \dfrac{1}{2}(1-r_k)^{-1}r_k (\widetilde{z}-\widetilde{\gamma}+  \frac{1}{a_{N_k}}\Sigma_{k}^{-1}\zeta_k)'\Sigma_{k}  (\widetilde{z}-\widetilde{\gamma} + \frac{1}{a_{N_k}} \Sigma_{k}^{-1}\zeta_k) +  \dfrac{1}{(a_{N_k})^2}B_{s_{E_k}}(a_{N_k}z)\Big\}\\
&= \frac{N_k}{2}\eta' \Sigma^{-1}_{k} \eta -\displaystyle\inf_{\gamma, z}\Big\{ \dfrac{1}{2}(\sqrt{N_k}\gamma-  \sqrt{N_k}\Sigma^{-1}_{k} \eta)' \Sigma_{k} (\sqrt{N_k}\gamma-  \sqrt{N_k}\Sigma^{-1}_{k} \eta) \\
&+  \dfrac{1}{2}(1-r_k)^{-1}r_k (\sqrt{N_k}z-\sqrt{N_k}\gamma+  \Sigma_{k}^{-1}\zeta_k)'\Sigma_{k}  (\sqrt{N_k}z-\sqrt{N_k}\gamma +  \Sigma_{k}^{-1}\zeta_k) \\
&\;\;\;\;\;\;\;\;\;\;\;\;\;\;\;\;\;\;\;\;\;\;\;\;\;\;\;\;\;\;\;\;\;\;\;\;\;\;\;+  B_{s_{E_k}}(\sqrt{N_k}z)\Big\}.
\end{aligned}
\end{equation}
The last display is derived  by 
reparameterizing $a_{N_k} \widetilde\gamma = \sqrt{N_k}\gamma$, $a_{N_k} \widetilde{z} = \sqrt{N_k}z$.
Denote by $\mathcal{P}_{N_k}(\eta)$ the approximate log-partition function in the last display, which we use to obtain an operational expression for the observed Fisher information matrix next.
We plug in the sample estimate for the unknown covariance $\Sigma_k$ to estimate the value of the observed Fisher information matrix.
Proposition \ref{Fisher:info} outlines the derivation of this estimator. 
The symbol $e_1$ in the claim denotes a $q_k + s$ dimensional block diagonal matrix, where the first $s$-dimensional matrix along the diagonal is the identity matrix with $s$ columns and the second $q_k$-dimensional matrix is a matrix of all zeros.

\begin{proposition}
	\label{Fisher:info}
	Based on the approximate value for the log-partition function in \eqref{approx:part}, the inverse of the observed Fisher information matrix assumes the following expression:
	\begin{equation*}
	\begin{aligned}
	e_1'\left((1-r_k)^{-1}\hat{\Sigma}^{-1}_{k}  - (1-r_k)^{-2}r^2_k\left((1-r_k)^{-1}r_k\hat{\Sigma}_{k} + \nabla^2 B_{s_{E_k}}(\sqrt{N_k} \widehat{z}_k)\right)^{-1}\right)e_1.
	\end{aligned}
	\end{equation*}
\end{proposition}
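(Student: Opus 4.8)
The plan is to read off the observed Fisher information as the Hessian of the approximate log-partition function $\mathcal{P}_{N_k}(\eta)$ in \eqref{approx:part} with respect to the natural parameter $\eta$, and then convert its inverse into the variance of $\sqrt{N_k}\,\widehat{\gamma}_k$ through the linear reparameterization $\eta=\Sigma_k\gamma$, extracting the $\alpha$-block at the end with $e_1$. Writing $c=(1-r_k)^{-1}r_k$ and abbreviating the inner infimum in \eqref{approx:part} as a value function $V(\eta)=\inf_{\gamma,z}G(\gamma,z,\eta)$, we have $\mathcal{P}_{N_k}(\eta)=\tfrac{N_k}{2}\eta'\Sigma_k^{-1}\eta-V(\eta)$, so the whole task reduces to differentiating $V$ twice. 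Because the objective $G$ is strongly convex in $(\gamma,z)$ with an interior (log-barrier) minimizer—a fact already exploited in the proof of Theorem \ref{unbiasedness}—the minimizer is a smooth function of $\eta$ and both the envelope theorem and the implicit function theorem apply.

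First I would apply the envelope theorem. Since $\eta$ enters $G$ only through the quadratic $\tfrac{1}{2}(\sqrt{N_k}\gamma-\sqrt{N_k}\Sigma_k^{-1}\eta)'\Sigma_k(\sqrt{N_k}\gamma-\sqrt{N_k}\Sigma_k^{-1}\eta)$, a short computation gives $\nabla_\eta V(\eta)=N_k\Sigma_k^{-1}\eta-\sqrt{N_k}\,u^\star(\eta)$, where $u^\star(\eta)=\sqrt{N_k}\,\gamma^\star(\eta)$ is the scaled minimizing $\gamma$. Differentiating once more, $\nabla^2_\eta V(\eta)=N_k\Sigma_k^{-1}-\sqrt{N_k}\,\nabla_\eta u^\star(\eta)$, so that the Hessian of the log-partition function collapses to $\nabla^2\mathcal{P}_{N_k}(\eta)=\sqrt{N_k}\,\nabla_\eta u^\star(\eta)$. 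The remaining content is therefore the Jacobian $\nabla_\eta u^\star$, which I would obtain by implicitly differentiating the stationarity conditions of the inner program.

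The stationarity conditions characterizing $(u^\star,w^\star)$, with $w^\star=\sqrt{N_k}\,\widehat{z}_k$ and $H=\nabla^2 B_{s_{E_k}}(\sqrt{N_k}\widehat{z}_k)$, form the two-block system
\begin{equation*}
\Sigma_k(u-v)-c\,\Sigma_k(w-u+\Sigma_k^{-1}\zeta_k)=0,\qquad c\,\Sigma_k(w-u+\Sigma_k^{-1}\zeta_k)+\nabla B_{s_{E_k}}(w)=0,
\end{equation*}
where $v=\sqrt{N_k}\Sigma_k^{-1}\eta$. Differentiating in $v$ and eliminating the $w$-block yields $\bigl[(1+c)I-c^2(c\Sigma_k+H)^{-1}\Sigma_k\bigr]\,\partial u^\star/\partial v=I$; using $1+c=(1-r_k)^{-1}$ and the chain rule $\partial v/\partial\eta=\sqrt{N_k}\Sigma_k^{-1}$ then gives a closed form for $\nabla^2\mathcal{P}_{N_k}(\eta)=N_k\,(\partial u^\star/\partial v)\,\Sigma_k^{-1}$. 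Inverting this Hessian, transporting it to the $\gamma$-scale via $\Sigma_k^{-1}[\nabla^2\mathcal{P}_{N_k}]^{-1}\Sigma_k^{-1}$, multiplying by $N_k$ to pass to $\sqrt{N_k}\,\widehat{\gamma}_k$, and finally replacing $\Sigma_k$ by its sample version $\widehat{\Sigma}_k$ from \eqref{sample:pooled:covariance}, the cross terms telescope into $(1-r_k)^{-1}\widehat{\Sigma}_k^{-1}-(1-r_k)^{-2}r_k^2\bigl((1-r_k)^{-1}r_k\widehat{\Sigma}_k+H\bigr)^{-1}$, and pre- and post-multiplying by the block selector $e_1$ recovers the stated expression.

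I expect the main obstacle to be the matrix elimination in the last paragraph: collapsing the coupled $(u,w)$ Jacobian system into the single Schur-complement-type inverse $\bigl((1-r_k)^{-1}r_k\widehat{\Sigma}_k+\nabla^2 B_{s_{E_k}}(\sqrt{N_k}\widehat{z}_k)\bigr)^{-1}$ while correctly tracking the $\Sigma_k$ versus $\Sigma_k^{-1}$ factors through the change of variables from the natural parameter $\eta$ to $\gamma$. Care is also needed to verify that $H$ enters only through the $z_{k,2}$-block on which the barrier $B_{s_{E_k}}$ depends, and that plugging $\widehat{\Sigma}_k$ in for $\Sigma_k$ is the intended operational step; the envelope and implicit-function manipulations themselves are routine given the strong convexity established earlier.
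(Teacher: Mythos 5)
Your proposal is correct and follows essentially the same route as the paper: both arguments reduce the Hessian of the approximate log-partition function to the Jacobian of the optimal $\gamma$ in the coupled $(\gamma,z)$ program (you via the envelope theorem on the primal infimum, the paper via the equivalent convex-conjugate formulation evaluated at $\widehat{\Sigma}_k\widehat{\gamma}_k^{\text{\;carve}}$), then implicitly differentiate the same two-block stationarity system, eliminate the $z$-block into the Schur-complement inverse $\bigl((1-r_k)^{-1}r_k\widehat{\Sigma}_k+\nabla^2 B_{s_{E_k}}(\sqrt{N_k}\widehat{z}_k)\bigr)^{-1}$, and sandwich with $\widehat{\Sigma}_k^{-1}$ factors before extracting the $e_1$ block. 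Your elimination identity $\bigl[(1+c)I-c^2(c\Sigma_k+H)^{-1}\Sigma_k\bigr]\partial u^\star/\partial v=I$ with $1+c=(1-r_k)^{-1}$ matches the paper's computation of $\nabla\gamma^*$ exactly, so the two proofs differ only in bookkeeping.
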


\begin{proof}
	Using the approximate expression for the log-partition function in \eqref{approx:part}, the observed Fisher information submatrix is equal to
	$$e_1' \hat{\Sigma}_{k} \nabla^2 \mathcal{P}_{N_k}( \widehat\Sigma_k\widehat{\gamma}_k^{\text{\; carve}}) \hat{\Sigma}_{k} e_1 = e_1' \hat{\Sigma}_{k}  \nabla \gamma^*\hat{\Sigma}_{k} e_1$$
	where $ \gamma^*$ equals
	\begin{equation}
	\label{opt:fi}
	\begin{aligned}
	& \displaystyle\arg\sup_{\gamma} \sqrt{N_k}\gamma' \hat\Sigma_k\sqrt{N_k}\widehat{\gamma}_k^{\text{\; carve}} -\dfrac{1}{2}\sqrt{N_k}\gamma' \widehat{\Sigma}_k \sqrt{N_k}\gamma \\
	&\;\;\;\;\;\;-\displaystyle\inf_{z} \Big\{\dfrac{1}{2}(1-r_k)^{-1}r_k (\sqrt{N_k}z-\sqrt{N_k}\gamma+ \hat{\Sigma}_{k}^{-1}\zeta_k)'\hat{\Sigma}_{k}  (\sqrt{N_k}z-\sqrt{N_k}\gamma + \hat{\Sigma}_{k}^{-1}\zeta_k) \\
	&\;\;\;\;\;\;\;\;\;\;\;\;\;\;\;\;\;\;\;\;\;\;\;\;\;\;\;\;\;\;\;\;\;\;\;\;\;\;\;+ B_{s_{E_k}}(\sqrt{N_k}z)\Big\}.
	\end{aligned}
	\end{equation}
	Solving \eqref{opt:fi} gives us
	$$ \hat\Sigma_k\sqrt{N_k}\widehat{\gamma}_k^{\text{\; carve}}= \widehat{\Sigma}_k   \sqrt{N_k}\gamma^*+ (1-r_k)^{-1}r_k \hat{\Sigma}_{k} \left(\sqrt{N_k}\gamma^*- \hat{\Sigma}_{k}^{-1}\zeta_k- \sqrt{N_k}\widehat{z}_k\right),$$ 
	where
	$$
	(1-r_k)^{-1}r_k\hat{\Sigma}_{k}  \left(\sqrt{N_k} \widehat{z}_k-\sqrt{N_k}\gamma^*+   \hat{\Sigma}_{k}^{-1}\zeta_k\right)  + \nabla B_{s_{E_k}}(\sqrt{N_k} \widehat{z}_k)= 0.
	$$
	The former equations yields us $\gamma^*= \widehat{\gamma}_k$, using the expression for our carved estimator.
	Taking further derivatives, we obtain
	$$\nabla_{\gamma^*} \widehat{z}_k = \left((1-r_k)^{-1}r_k\hat{\Sigma}_{k} + \nabla^2 B_{s_{E_k}}(\sqrt{N_k}  \widehat{z}_k)\right)^{-1}(1-r_k)^{-1}r_k\hat{\Sigma}_{k},$$
	\begin{equation*}
	\begin{aligned}
	&\nabla \gamma^*  = \left((1-r_k)^{-1} \hat{\Sigma}_{k}  - (1-r_k)^{-1}r_k\hat{\Sigma}_{k}\nabla_{\gamma^*} \widehat{z}_k\right)^{-1}\\
	&= \Big((1-r_k)^{-1} \hat{\Sigma}_{k}  - (1-r_k)^{-1}r_k\hat{\Sigma}_{k} \left((1-r_k)^{-1}r_k\hat{\Sigma}_{k} + \nabla^2 B_{s_{E_k}}(\sqrt{N_k} \widehat{z}_k)\right)^{-1}(1-r_k)^{-1}r_k\hat{\Sigma}_{k}\Big)^{-1}.
	\end{aligned}
	\end{equation*}
	Then, the inverse for the Fisher information matrix is given by:
	\begin{equation*}
	\begin{aligned}
	(1-r_k)^{-1}\hat{\Sigma}^{-1}_{k}  - (1-r_k)^{-2}r^2_k\left((1-r_k)^{-1}r_k\hat{\Sigma}_{k} + \nabla^2 B_{s_{E_k}}(\sqrt{N_k} \widehat{z}_k)\right)^{-1},
	\end{aligned}
	\end{equation*}
 which directly leads to our claim in the Proposition.
\end{proof}

\section{Carved Estimator for Logistic Regression}
\label{Appendix:A}

Fixing some notations, let 
$$p(x) = \dfrac{\exp(x)}{1+ \exp(x)}$$
for $x\in \mathbb{R}$.

Suppose that we observe a binary response variable which assumes values in $\{0,1\}$. 
Introducing some notations, we let
$$p(x) = \dfrac{\exp(x)}{1+ \exp(x)}$$
for $x\in \mathbb{R}$.
Let $\pi(D_k\alpha + X_k \beta) \in \mathbb{R}^{n_k}$ be the vector with 
$$p\left((D_k^{(i)})'\alpha+ (X_k^{(i)})' \beta\right)$$ 
as its $i^{\text{th}}$ component.
We let
\begin{equation*}
\begin{aligned}
\ell(Y_k, \pi(D_k\alpha + X_k \beta)) &= -\sum_{i=1}^{n_k} \Big\{ Y_k^{(i)}\log p\left((D_k^{(i)})'\alpha+ (X_k^{(i)})' \beta\right) \\
&\;\;\;\;\;\;+ (1-Y_k^{(i)}) \log\left(1- p\left((D_k^{(i)})'\alpha+ (X_k^{(i)})' \beta\right)\right)\Big\}
\end{aligned}
\end{equation*}
denote the logistic loss based on $n_k$ samples of the $k^{\text{th}}$ existing study.

In each existing study, we apply the logistic regression with an added LASSO penalty to estimate
\begin{equation}\label{eq:logistic:lasso}
\widehat{\gamma}^{\text{\;(L)}}_{k} = \begin{pmatrix} \widehat{\alpha}^{\text{\;(L)}}_{k} \\ \widehat{\beta}^{\text{\;(L)}}_{k}\end{pmatrix}=\underset{\alpha\in\mathbb{R}^s,\beta\in \mathbb{R}^{p_k} }{\arg\min}\left\{ \frac{1}{r_k\sqrt{N_k}} \ell(Y_k, \pi(D_k\alpha + X_k \beta))
+ || \Lambda_k \beta ||_1 \right\}, 
\end{equation}
where 
$\widehat{\alpha}^{\text{\;(L)}}_{k}\in \mathbb{R}^s$, $\widehat{\beta}^{\text{\;(L)}}_{k} \in \mathbb{R}^{q_k}$ for $k=1,2,\ldots, K$.
Denote by $E_k$ the support set of $\widehat{\beta}_k^{\text{\;(L)}}$. 

Next, we identify the summary statistics that we require from each existing study, and give the form of our estimator.
Consider the following summary information from existing study $k$. 
\begin{enumerate}
\setlength\itemsep{1em}
	\item \textit{Summary from model selection}: the support set of the LASSO estimator $E_k$, the penalty weights $\Lambda_{ E_k}$, and the signs of the (nonzero) LASSO estimator for the selected covariates $s_{E_k} = \text{sign}(\widehat{\beta}_k^{\text{\;(L)}})$. \label{summary:sel}
        \item \textit{Summary from the first two moments in the selected model}: the MLE in the selected model $E_k$
        $$
	\widehat{\xi}_k := \left[ \begin{array}{cc}
	\widehat{\xi}_{k,01} \\
	\widehat{\xi}_{k,02} 
	\end{array} \right]$$
        and the estimated information matrix in the same model
	\begin{align*}
	\hat{\Xi}_{k} :=  \ \left[
	\begin{array}{cc}
	\hat{\Xi}_{k, 11} & \hat{\Xi}_{k, 12}\\
	\hat{\Xi}_{k, 21} & \hat{\Xi}_{k, 22} 
	\end{array} \right]   =  & \frac{1}{n_k} \left[
	\begin{array}{cc}
	D_k'W_{k}(\widehat{\xi}_k)D_k & D_k'W_{k}(\widehat{\xi}_k)X_{k, E_k}\\
	X_{k, E_k}'W_{k}(\widehat{\xi}_k)D_k & X_{k,  E_k}'W_{k}(\widehat{\xi}_k)X_{k, E_k}
	\end{array} \right],
	\end{align*} 
    where 
    $W_k(\widehat{\xi}_k) \in \mathbb{R}^{n_k \times n_k}$ denotes a diagonal matrix with
$$\pi_i\left(\begin{bmatrix} D_k & X_{k,E_k} \end{bmatrix}\widehat{\xi}_k\right) \left(1- \pi_i\left(\begin{bmatrix} D_k & X_{k,E_k} \end{bmatrix}\widehat{\xi}_k\right)\right)$$
in its $i^{\text{th}}$ diagonal entry.
	Note that $\hat{\xi}_k$ and $\hat{\Xi}_{k} $ are partitioned along the same dimensional structure as before.
	\label{summary:data}
\end{enumerate}

Let $\hat\xi$ be the MLE in the selected model, using observations in the validation study, and let $\hat{\Sigma}= D'W(\hat\xi)D$ be the (corresponding) estimated information matrix in the same model.
Denote by $W(\widehat{\xi})\in \mathbb{R}^{n \times n}$ a diagonal matrix with
$$\pi_i\left(\begin{bmatrix} D & X_{E_k} \end{bmatrix}\widehat{\xi}\right) \left(1- \pi_i\left(\begin{bmatrix} D & X_{E_k} \end{bmatrix}\widehat{\xi}\right)\right)$$
in its $i^{\text{th}}$ diagonal entry.
First, we aggregate the estimated information matrix as
\begin{equation}
\hat{\Sigma}_{k} =  n_k\hat{\Xi}_{k}  + \begin{pmatrix} D & X_{E_k} \end{pmatrix}' W(\widehat{\xi})\begin{pmatrix} D & X_{E_k} \end{pmatrix},
\label{sample:pooled:covariance}
\end{equation}
and then we compute the aggregated MLE as
$$
\hat\gamma_k=\begin{pmatrix} (\hat\alpha_k)' & (\hat\beta_k)'\end{pmatrix}' =\hat{\Sigma}_{k}^{-1} \left( n_k\hat{\Xi}_{k} \widehat{\xi}_k + \hat{\Sigma}\widehat{\xi}\right).
$$
Hereon, our carved estimator takes the exact same form as \eqref{eq:gamma-carve}. 
That is, we solve the optimization in \eqref{optimizer} by using the aggregated MLE $\hat\gamma_k$, and information matrix $\hat{\Sigma}_{k}$.

Our theory follows the exact steps as before once we define
our randomization variable as
\begin{equation}
\begin{aligned}
\omega_k &= \begin{pmatrix} (\omega_{k,1})' & (\omega_{k,2})' \end{pmatrix}'; \; \omega_{k,1}\in \mathbb{R}^s, \omega_{k,2}\in \mathbb{R}^{p_k}\\
&=\dfrac{\partial}{\partial\gamma}\Bigg(\frac{1}{\sqrt{N_k}} \ell(Y_k, \pi(D_k\alpha + X_k \beta)) + \frac{1}{\sqrt{N_k}} \ell(Y, \pi(D\alpha + X \beta))\\
&\;\;\;\;\;\;\;\;\;\;\;\;\;\;\;\;\;\;-\frac{1}{r_k\sqrt{N_k}} \ell(Y_k, \pi(D_k\alpha + X_k \beta))\Bigg)\Bigg\lvert_{\hatG_k},
\label{randomization:gen}
\end{aligned}
\end{equation}
and set
$$\hat{\Gamma}_{k} = \dfrac{1}{N_k}\left\{X'_{k,E_k^c}(Y_k-\pi(D_k\hat\alpha_k + X_k \hat\beta_k)) + X'_{E_k^c}(Y-\pi(D\hat\alpha_k + X \hat\beta_k)) \right\}.$$
Then, we have the asymptotic linear representation as stated in Lemma \ref{limiting:Gaussianity}, and the rest of the theory proceeds as before.

\end{document}